\newtheorem{proposition}{Proposition}
\newtheorem*{proposition*}{Proposition}
\newcommand\scalemath[2]{\scalebox{#1}{\mbox{\ensuremath{\displaystyle #2}}}}
\date{\today}
\let\cat@comma@active\@empty
\begin{document}
\title{Noncommutative polynomial optimization under symmetry}
\author{Marie Ioannou}
\address{Department of Applied Physics University of Geneva, 1211 Geneva, Switzerland}
\author{Denis Rosset}
\address{Department of Applied Physics University of Geneva, 1211 Geneva, Switzerland}
\begin{abstract}
    We present a general framework to exploit the symmetries present in the Navascu{\'e}s-Pironio-Ac{\'i}n semidefinite relaxations that approximate invariant noncommutative polynomial optimization problems.
    We put equal emphasis on the moment and sum-of-squares dual approaches, and provide a pedagogical and formal introduction to the Navascu{\'e}s-Pironio-Ac{\'i}n technique before working out the impact of symmetries present in the problem.
    Using our formalism, we compute analytical sum-of-square certificates for various Bell inequalities, and prove a long-standing conjecture about the exact maximal quantum violation of the CGLMP inequalities for dimension 3 and 4.
    We also apply our technique to the Sliwa inequalities in the Bell scenario with three parties with binary measurements settings/outcomes.
    Symmetry reduction is key to scale the applications of the NPA relaxation, and our formalism encompasses and generalizes the approaches found in the literature.
\end{abstract}

\maketitle

\section{Introduction}

Noncommutative polynomial optimization is a widely used tool in quantum information.
Many applications of interest involve a quantum state on which parties apply operators.
In the Bell tests in quantum nonlocality, the parties try to select an optimal quantum state and measurement operators to maximize the violation of a Bell inequality.
It may also be the adversary in a quantum key distribution task that tries to maximize the information it can guess about the secret being shared.
In these cases, the initial state of the system, as well as the operators acting on it are variables.
It is possible to encode the structure of the problem into an abstract algebra, so that the optimization problem reduces to a noncommutative polynomial optimization problem subject to constraints.
In general, this problem cannot be solved directly.
Lower bounds are obtained, in general, by fixing the dimension to a reasonable number so that explicit realizations can be parameterized and optimized over using random starting points and local solvers~\cite{Pal_2009}.
On the other hand, upper bounds are obtained using the hierarchy of semidefinite~\cite{Vandenberghe_1996,boyd2004convex} relaxations proposed by Navascu{\'e}s, Pironio, Ac{\'i}n (NPA)~\cite{Navascues2007,Navascues_2008,Pironio_2010}.
One formulation of the NPA hierarchy provides an optimality certificate.
The soundness of this certificate can be verified independently without solving the semidefinite problem again.
The NPA hierarchy has been used to find optimal quantum bounds, and to certify properties of quantum devices through the self-testing approach~\cite{Bamps_2015,Supic_2016,Salavrakos_2017,Augusiak_2019,Kaniewski_2019}, see~\cite{Supic2020} for a review, and~\cite{Mayers_2004} for an application to quantum cryptography~\cite{Mayers_1998}.
In this approach, the certification is done in a device-independent way, and guarantees that the devices have no malfunction nor tampering; such methods are of great interest to secure communications.

The noncommutative polynomial optimization problem has a commutative counterpart; the corresponding semidefinite programming relaxations are known as the Lasserre hierarchy~\cite{Lasserre2001,Parrilo2003a}.
This commutative hierarchy has been used in quantum information as well.
One of the first examples is the computation of bounds on Bell inequalities for states of a given dimension~\cite{Liang2007a}.
Later, the technique was used in many body quantum systems~\cite{Fadel2017,Baccari2017} to characterize the classical/quantum boundary; also to bound the sets of classical correlations in networks~\cite{Pozas-Kerstjens2019}.

Most of the scenarios listed above exhibit some degree of symmetry.
Parrilo and Gatermann~\cite{Gatermann_2004} proposed a variant of the commutative hierarchy that exploits the symmetries present in the problem.
In quantum information/noncommutative setting, several papers exploited symmetries to reduce the problem size~\cite{Bamps_2015,Cai2016,Aguilar2018,Rosset2018,Tavakoli2019,Agresti2021,Tavakoli2021}.
When the problem is small, the symmetry reduction can be sufficient to bring the problem to a form that is solvable analytically.
In larger instances, symmetry reduction is often necessary to solve the problem at all numerically.
If we consider CPU time, it is possible to leave a solver running for a few weeks; however, the available memory is a hard limit that prevents the computation altogether.
Complete symmetrization as described in this paper includes block-diagonalization and helps for both.

Symmetry reduction can be tricky to get right, especially given the rich structure already present in the NPA hierarchy.
While a formal description of it was given in a mathematics journal~\cite{Pironio_2010}, the same level of detail is not always present in quantum information publications.
A second difficulty appears when using the Collins-Gisin enumeration of POVM elements:
permuting measurement outcomes is no longer a unitary operation.
Most textbooks about groups and representations assume unitarity right at the beginning, including~\cite{Serre} and the Gatermann and Parrilo commutative hierarchy~\cite{Gatermann_2004}.

To alleviate those difficulties, we propose in this paper a self-contained description of the NPA hierarchy and its invariance properties under symmetry.
In particular, we consider on equal footing the moment-based and sum-of-squares (SOS) constructions -- as these represent dual points of view of the same mathematical object.
We revisit the formal description of~\cite{Pironio_2010}, in particular the description of polynomial rings under equality constraints, but we tailor our examples from the study of Bell inequalities.
In our pedagogical exposition, we remove the assumptions that are done for the sake of convenience: we allow problems with complex coefficients and with symmetries that are not necessarily unitary.
For clarity, we focus on small scale problems where symmetrization enables the computation of exact results.
Using exact arithmetic, we compute and verify the quantum bound of the CGLMP~\cite{Collins_2002} inequality for $d=2,3,4$.
For $d=2$, the inequality reduces to the well-known CHSH~\cite{Clauser1974} inequality, for which an optimal certificate has been derived a number of times~\cite{Reichardt2013,Bamps_2015}.
For $d=3,4$, we prove for the first time the optimality of the bound conjectured in~\cite{Acin_2002} and verified up to numerical precision in~\cite{Navascues2007,Navascues_2008}, thus solving partially the 27th open problem from Werner's list~\cite{Krueger2005}.
We also compute the almost quantum bound for four of Sliwa's inequalities~\cite{Sliwa_2003,Vallins2017} in the Bell scenario with three parties, and binary settings/outcomes.

Our paper is structured as follows.

The first sections present the NPA hierarchy.
In Section~\ref{sec:problemtosolve} we give an informal introduction to the problem we aim to solve.
In Section~\ref{sec:definitions}, we define formally the polynomials used in our optimization problems.
In Section~\ref{sec:semidefiniterelaxations}, we provide a self-contained introduction to the NPA hierarchy, first through moment-based relaxations, then using the sum-of-squares approach, and then comment on their duality.
We then discuss the extraction of certificates, so that exact bounds can be verified independently of the use of a numerical solver.
We conclude the section with the construction of a SOS certificate for the CHSH inequality.

The subsequent sections discuss the symmetries.
In Section~\ref{sec:symmetriesdefs}, we first consider the complex conjugate invariance as a symmetry, and demonstrate the usefulness of convexity when symmetries are present.
We then put the symmetries of polynomial rings and optimization problems on a formal footing.
Those symmetries will impact the semidefinite relaxations, thus we derive the block-diagonal form adequate for Hermitian matrices invariant under possibly non-unitary transformations.
We conclude the section by discussing computational challenges.
In Section~\ref{sec:symadaptedsdp}, we study the impact of those symmetries on the moment-based and sum-of-squares approaches to the NPA hierarchy.
We formulate their block-diagonalization so that semidefinite duality is preserved for the symmetrized problems, and show how to recover SOS certificates.
For pedagogical purposes, we interleave the theory with the symmetrization of the CHSH example already studied.

The final sections of our paper present applications.
We investigate the properties of the CGLMP inequality in Section~\ref{sec:cglmpgeneralities}, including its invariance properties, the irreducible representations of its symmetry group and its conjectured maximal quantum violation for $d=3,4,5$ in exact arithmetic.
Interestingly, while the optimal quantum measurements have a regular structure, we observe that the complexity of the optimal state coefficients grows with $d$.
In Section~\ref{sec:cglmpcomputations}, we describe a step-by-step approach to derive SOS certificates, and the prove the optimality of the conjectured bounds for $d=3$ and $d=4$.
We finally turn to Sliwa's inequalities in Section~\ref{sec:Sliwa}, where we provide four SOS certificates and list the symmetry groups of the 46 Bell inequality families present in the scenario.

\section{Problem to be solved}
\label{sec:problemtosolve}
We consider a generic bipartite Bell expression of the type
\begin{equation}
  E^{\text{Bell}} = \sum_{a b x y} c_{a b x y} P (a b | x y).
\end{equation}
The indices $x = 0,...,n_X-1$ and $y=0,...,n_Y-1$ are settings, while the indices $a = 0,...,n_A-1$ and $b=0,...,n_B-1$ are outcomes of measurements done on distinct subsystems of a quantum state. 
This Bell expression is possibly associated with a local bound $\mu_{\mathcal{L}}$ so that $E^{\text{Bell}} \leqslant\mu_{\mathcal{L}}$ for all local correlations.

We now consider the problem of finding the quantum supremum $\mu_{\mathcal{Q}}$ so that $E^{\text{Bell}}\leqslant \mu_{\mathcal{Q}}$ for all quantum systems. 
We use the commuting-operator formalism ~\cite{Slofstra_2020} and write $P (a b | x y) = \langle \psi | \mathcal{A}_{a | x} \mathcal{B}_{b | y} | \psi \rangle$, where the measured $|\psi \rangle \in \mathcal{H}$ is a normalized quantum state in an arbitrary Hilbert space $\mathcal{H}$. 
The measurements are described on the first and second subsystem by the respective Hermitian measurement operators $\mathcal{A}_{a|x} : \mathcal{H} \rightarrow \mathcal{H}$ and $\mathcal{B}_{b|y} : \mathcal{H} \rightarrow \mathcal{H}$. 
By Naimark's dilation theorem, we can assume that those operators are projectors, so that, for each $x$, $\{ \mathcal{A}_{a|x} \}_a$ is {\em{projection-valued measure}} (PVM), and the same for $\mathcal{B}_{b|y}$.
The state obeys the constraint:
\begin{equation}
    \label{eq:normalization}
    \langle \psi | \psi \rangle = 1,
\end{equation}
while the operators obey the relations:
\begin{equation}
  \label{eq:completeness} 1 - \sum_a \mathcal{A}_{a|x} = 1 - \sum_b
  \mathcal{B}_{b|y} = 0,
\end{equation}
\begin{equation}
  \label{eq:commutation} 
  \mathcal{A}_{a| x} \mathcal{B}_{b|y}
  -\mathcal{B}_{b|y} \mathcal{A}_{a| x} = 0,
\end{equation}
\begin{equation}
  \label{eq:projector} 
  \begin{array}{llll}
    (\mathcal{A}_{a|x})^2 -\mathcal{A}_{a|x} = 0, & & \mathcal{A}_{a|x} \mathcal{A}_{a'|x} = 0 & (a \neq a'),\\
    (\mathcal{B}_{b|y})^2 -\mathcal{B}_{b|y} = 0, & & \mathcal{B}_{b|y} \mathcal{B}_{b'|y} = 0 & (b \neq b'),
  \end{array}
\end{equation} over combinations of indices $a, a', b, b', x, y$. Eq.~\eqref{eq:commutation} encodes the commutation of measurements, while Eqs.~\eqref{eq:completeness} and~\eqref{eq:projector} describe projective measurements. 
In particular, Eq.~\eqref{eq:projector} implies that all operators are positive semidefinite.
Eq.~\eqref{eq:completeness} enables us to remove all operators with either $a = n_A-1$ or $b = n_B-1$ by defining:
\begin{equation}
    \label{eq:removelastoutcome}
    \mathcal{A}_{n_A-1|x} = 1 - \sum_{a=0}^{n_A-2} \mathcal{A}_{a|x}, \quad \mathcal{B}_{n_B-1|y} = 1 - \sum_{b=0}^{n_B-2} \mathcal{B}_{b|y}
\end{equation}
for all $x$ and $y$.
We rewrite the objective function
\begin{equation}
\label{eq:objectivefunction}
  \mathcal{E}^{\text{Bell}} = \delta + \sum_{a x} \alpha_{a x} \mathcal{A}_{a | x} + \sum_{b y} \beta_{b y} \mathcal{B}_{b | y} + \sum_{a b x y} \gamma_{a b x y} \mathcal{A}_{a | x} \mathcal{B}_{b | y},
\end{equation}
where in particular $a \le n_A-2$ and $b \le n_B-2$. 
The expectation value of the objective function $\mathcal{E}^{\text{Bell}}$ is then $E^{\text{Bell}} = \langle \psi | \mathcal{E}^{\text{Bell}} | \psi \rangle$ in Dirac notation with $\ket{\psi}$ a normalized state vector Eq.~\eqref{eq:normalization}.

Let us solve the optimization problem $\mu_{\mathcal{Q}} = \sup \mathcal{E}^{\text{Bell}}$:
\begin{equation}
\label{eq:optimizationproblem}
    \mu_{\mathcal{Q}} = \sup_{\ket{\psi}, \{\mathcal{A}_{a|x}\}_{a,x},\{\mathcal{B}_{b|y}\}_{b,y}} \sum_{abxy} c_{abxy}\bra{\psi}\mathcal{A}_{a|x} \mathcal{B}_{b|y}\ket{\psi},
\end{equation}
subject to the constraints~\eqref{eq:normalization}-\eqref{eq:projector}.

On can say that the bound is {\em tight} when $E^{\text{Bell}} = \mu_{\mathcal{Q}}$ for particular quantum correlations. 
In this paper, we will compute a quantum upper bound for $E^{\text{Bell}}$ in the framework of noncommutative polynomial optimization. 
In Ref.~\cite{Navascues_2008}, the problem is solved by an approximation consisting of a hierachy such that the quantum set $\mathcal{Q}$ is outer approximated. 
The approximation uses semidefinite programming and $\mathcal{Q}_1 \supseteq \mathcal{Q}_2 \supseteq ... \supseteq \mathcal{Q}_n$ converges to $\mathcal{Q}$ as $n\rightarrow\infty$. 
Until now, there is no analytical characterization of the quantum set $\mathcal{Q}$ even though it is convex. 
Therefore, finding the optimal quantum violation of a Bell inequality is in general a hard task.

\section{Definitions}
\label{sec:definitions}
Let us define formally the space that $\mathcal{E}^{\text{Bell}}$ inhabits. 
In the next subsection, we strive to be formal and succinct; nevertheless there should be no surprises, as the content matches how in practice people compute with noncommutative polynomials in the quantum information literature.

\subsection{Noncommutative polynomial rings}

We write $\mathcal{P}_{\text{free}} = \mathbb{C} [\mathcal{X}_1, \ldots,\mathcal{X}_N]$ the free noncommutative polynomial ring in $N$ variables over the complex numbers. 
Let $\overline{i} = (i_1, \ldots, i_d)$ be a sequence of integers between $1$ and $N$. 
We write
\begin{equation}
\label{eq:monomial}
  \mathcal{M}_{\overline{i}} =\mathcal{X}_{i_1} \mathcal{X}_{i_2} \ldots
  \mathcal{X}_{i_d}
\end{equation}
a monomial of degree $d = \deg \left( \mathcal{M}_{\overline{i}} \right)$ corresponding to the sequence $\overline{i}$. 
The multiplication of monomials is given by concatenation of the variables in Eq.~\eqref{eq:monomial}.
An element $\mathcal{S} \in \mathcal{P}_{\text{free}}$ is a formal sum
\begin{equation}
  \mathcal{S}= \sum_{\overline{i}} \mu_{\overline{i}}
  \mathcal{M}_{\overline{i}}
\end{equation}
where the sum is finite and runs over all sequences $\overline{i}$ such that $\mu_{\overline{i}} \in \mathbb{C}$ is nonzero. 
The degree of $\mathcal{S}$ is
\begin{equation}
  \deg \mathcal{S}= \max_{\overline{i}} \deg \left( \mathcal{M}_{\overline{i}}
  \right) .
\end{equation}
Multiplication on $\mathcal{P}_{\text{free}}$ follows from the concatenation of monomials. 
We equip $\mathcal{P}_{\text{free}}$ with a linear involution $\ast$ such that
\begin{equation}
  (\mathcal{X}_{i_1} \mathcal{X}_{i_2} \ldots \mathcal{X}_{i_d})^\ast =
  (\mathcal{X}_{i_d} \ldots \mathcal{X}_{i_2} \mathcal{X}_{i_1}).
\end{equation}
Hence, $\mathcal{P}_{\text{free}}$ is formally a $\ast$-algebra~\footnote{The definition assumes $\mathcal{X}_i^\ast = \mathcal{X}_i$ for all $i$. Indeed, we assume self-adjoint operators to alleviate the notation burden, but this restriction is easily relaxed for problems more general than optimizing over Bell inequalities.}.

Going back to our quantum optimization problem, we enumerate all operators $(\mathcal{A}_{0|0}, \ldots, \mathcal{A}_{n_A-2|0}, \mathcal{A}_{0|1}, \ldots, \mathcal{B}_{0|0}, \ldots, \mathcal{B}_{n_B-2|0}, \mathcal{B}_{0|1}, \ldots)$ and put them in correspondence with the list $(\mathcal{X}_1,\ldots, \mathcal{X}_N)$ for ease of notation.
We note that we excluded from the variables the operators with either $a=n_A-1$ or $b=n_B-1$ using Eq.~\eqref{eq:removelastoutcome}. 

We deal with the remaining equality constraints~\eqref{eq:commutation} and~\eqref{eq:projector} using the approach discussed in Sec. 3.5. of~\cite{Pironio_2010}.
We put the left-hand side of the remaining relations in a finite set $\overline{\mathcal{R}} \subseteq \mathcal{P}_{\text{free}}$, such that $\mathcal{R} = 0$ for all $\mathcal{R} \in \overline{\mathcal{R}}$.
It is clear that $\left \langle \psi \middle | \mathcal{R} \middle | \psi \right \rangle = 0$, but $\left \langle \psi \middle | \mathcal{S} \mathcal{R} \mathcal{T} \middle | \psi \right \rangle = 0$ holds true for arbitrary polynomials $\mathcal{S}, \mathcal{T}$ as well.
We write $\mathcal{I} = \left\langle \overline{\mathcal{R}} \right\rangle$ the set of all expressions equal to $0$ by a similar reasoning.
It is a set closed under addition, and under left/right multiplication by elements of $\mathcal{P}_{\text{free}}$, and thus a {\em two-sided ideal} of that polynomial ring~\cite{Mora_1994}.
Formally:
\begin{eqnarray}
  \mathcal{I} & = & \left\langle \overline{\mathcal{R}} \right\rangle \\
  & := & \left \{ \mathcal{S}_1 \mathcal{R}_1 \mathcal{T}_1 + \ldots + \mathcal{S}_L \mathcal{R}_L \mathcal{T}_L \right \} \nonumber \\
  & & \mathcal{S}_i, \mathcal{T}_i \in \mathcal{P}_{\text{free}}, \quad \mathcal{R}_i \in \overline{\mathcal{R}}.
\end{eqnarray}

For Bell scenarios, this ideal reads:
\begin{multline}
\mathcal{I}^{\text{Bell}}=\bigg \langle \mathcal{A}_{a|x}\mathcal{B}_{b|y}-\mathcal{B}_{b|y}\mathcal{A}_{a|x}, \\
\mathcal{A}_{a|x} \mathcal{A}_{a'|x} - \delta_{a,a'} \mathcal{A}_{a|x},
  \mathcal{B}_{b|y} \mathcal{B}_{b'|y} - \delta_{b,b'} \mathcal{B}_{b|y}
\bigg \rangle
\label{eq:ideal}
\end{multline}
enumerated over $a,a'=0,...,n_A-2$, $b,b'=0,...,n_B-2$, $x=0,...,n_X-1$ and $y=0,...,n_Y-1$.
From $\mathcal{I}$, we now define an equivalence relation $\sim$ on $\mathcal{P}_{\text{free}}$.
For $\mathcal{S}, \mathcal{T} \in \mathcal{P}_{\text{free}}$, we have
\begin{equation}
  \mathcal{S} \sim \mathcal{T} \quad \Leftrightarrow \quad \mathcal{S}-\mathcal{T} \in \mathcal{I}.
\end{equation}

Next, we describe a notion of working with noncommutative polynomials up to equivalence.

\subsection{Quotient polynomial ring}
\label{subsec: quotient poly ring}
For an element $\mathcal{S} \in \mathcal{P}$, we write $[\mathcal{S}] =\mathcal{S}+\mathcal{I}= \{ \mathcal{S}+ I : I \in \mathcal{I} \}$ the equivalence class of $\mathcal{S}$. 
We denote by $\mathcal{P}=\mathcal{P}_{\text{free}} /\mathcal{I}$ the set of such equivalence classes. 
One verifies $[\mathcal{S}] + [\mathcal{T}] = (\mathcal{S}+\mathcal{I}) + (\mathcal{T}+\mathcal{I}) = (\mathcal{S}+\mathcal{T}) +\mathcal{I}= [\mathcal{S}+\mathcal{T}]$ and similarly for multiplication. 
Thus $\mathcal{P}$ is also a noncommutative ring named the {\em quotient ring}. 
One verifies that the operation $\ast$ is well-defined on $\mathcal{P}$ if $\mathcal{I}$ is closed under the $\ast$ operation.

Now, working with elements of $\mathcal{P}$ as equivalence classes is pretty cumbersome.
Thus, for each element $[\mathcal{S}] \in \mathcal{P}$, we single out a canonical representative $\overline{\mathcal{S}} \in \mathcal{P}_{\text{free}}$, obtained by applying rewriting rules.
The rewriting system should be confluent, in that the same final $\overline{\mathcal{S}}$ is obtained after rewriting any element in $[\mathcal{S}]$.
For Eqs.~\eqref{eq:commutation}-\eqref{eq:projector}, the rewriting rules are, for all $a,a',b,b',x,y$:
\begin{eqnarray}
  \mathcal{B}_{b|y} \mathcal{A}_{a|x} & \rightarrow &
  \mathcal{A}_{a|x} \mathcal{B}_{b|y}, \nonumber \\
  \mathcal{A}_{a|x} \mathcal{A}_{a'|x} & \rightarrow & 
  \delta_{a,a'} \mathcal{A}_{a|x}, \nonumber \\
  \mathcal{B}_{b|y} \mathcal{B}_{b'|y} & \rightarrow & 
  \delta_{b,b'} \mathcal{B}_{b|y}.
  \nonumber
\end{eqnarray}
The confluence of this rewriting system is part of the folklore, but can be derived using the theory of noncommutative Groebner bases~\cite{Mora_1994}.

Let us take an example as an illustration. For the CHSH scenario, the free polynomial ring is given by $\mathcal{P}^{\text{CHSH}}_{\text{free}} :=\mathbb{C}[ \{ \mathcal{A}_{0|0},\mathcal{A}_{0|1},\mathcal{B}_{0|0},\mathcal{B}_{0|1}\} ]$. The monomial $\mathcal{B}_{0|0}\mathcal{A}_{0|1}\mathcal{B}_{0|0}\mathcal{A}_{0|0}$ has a representant in the quotient ring $[\mathcal{A}_{0|1}\mathcal{A}_{0|0}\mathcal{B}_{0|0}]$.
The CHSH inequality takes the form:
\begin{equation}
\label{eq:chshobjective}
    \mathcal{E}^{\text{CHSH}} = 2 + 4 \sum_{x,y=0}^1 (-1)^{(x+1)y} [\mathcal{A}_{0|x} \mathcal{B}_{0|y}] - 4 [\mathcal{A}_{0|1}] - 4 [\mathcal{B}_{0|0}].
\end{equation}

For the rest of this manuscript, we will work with canonical representatives of the quotient ring $\mathcal{P}$, and drop the $[\ldots]$ bracket representing equivalence classes.

Our optimization problem is thus, with $\mathcal{E} \in \mathcal{P}$ and $E = \left \langle \psi \middle | \mathcal{E} \middle | \psi \right \rangle$:
\begin{equation}
\label{eq:maxE}
    \sup_{\left | \psi \right \rangle, \{\mathcal{X}_i\}_i}  E \text{ s.t. } \left \langle \psi \middle | \psi \right \rangle = 1.
\end{equation}

The objective to be maximized needs to be real: to ensure this, we require $\mathcal{E}^\ast = \mathcal{E}$.

\section{Semidefinite programming relaxations}
\label{sec:semidefiniterelaxations}
Computing the quantum supremum $\mu_{\mathcal{Q}}$ is difficult in general.
However, any explicit solution of the optimization
problem~{\eqref{eq:optimizationproblem}} with $\lambda = E^{\text{Bell}}$
provides a lower bound $\lambda \leqslant \mu_Q$. Such a lower bound can be found using iterative optimization techniques~\cite{Pal_2009}, or educated analytical guesses~\cite{Collins_2002,Acin_2002}.

Upper bounds $\mu_{\text{Q}} \leqslant u$ are derived usually using
semidefinite relaxations. Such relaxations can be understood using two
equivalent viewpoints: {\em moment-based} approaches and
{\em sum-of-squares} certificates.

\subsection{Moment-based relaxations}
\label{sec:momentbased}
Let $\vec{\mathcal{Q}} \in \mathcal{P}^n$, with coefficients $\mathcal{Q}_i$, be a vector of linearly independent elements of $\mathcal{P}$ for $i = 1, \ldots, n$, which we name the {\em generating sequence}. 
We define the matrix $\Xi \in \mathcal{P}^{n \times n}$ with coefficients $\Xi_{i j} = \mathcal{Q}_i \mathcal{Q}_j^{\ast}$.
We have
\begin{equation}
\label{eq:defXi}
    \Xi = \vec{\mathcal{Q}} \cdot \vec{\mathcal{Q}}^\dag\;,
\end{equation}
where $(\cdot)^\dagger := \big ((\cdot)^\ast \big)^\top$ is the conjugate transpose, $(\cdot)^\top$ the matrix transpose, and where the complex conjugate $(\cdot)^\ast$ on $\mathcal{P}$ is the involution.
We also write $(\cdot)^{-\dagger} = \big ( (\cdot)^{-1} \big )^\dag$.
Altogether, the coefficients $\Xi_{i j}$ span a vector subspace of $\mathcal{P}$.

We consider a basis of that subspace consisting of the linearly independent elements $(\mathcal{M}_k)_k$ for $k = 0, \ldots, m$.
For later convenience, we impose three conditions on this sequence.
First, the sequence starts with
\begin{equation}
\label{eq:condfirstone}
    \mathcal{M}_0 = 1\;,
\end{equation}
and we number the $\mathcal{M}_k$ starting from $0$ to treat scalars in $\mathcal{P}$ in a differentiated manner.
Second, the elements of the sequence are self-adjoint:
\begin{equation}
\label{eq:condselfadjoint}
    \mathcal{M}_k = \mathcal{M}_k^\dagger, \quad \forall k.
\end{equation}
Indeed, if, $\mathcal{M}_k$ is not self-adjoint, then we define $\mathcal{M}'_k = \mathcal{M}_k + \mathcal{M}_k^\dagger$ and $\mathcal{M}''_k = \sqrt{-1} (\mathcal{M}_k - \mathcal{M}_k^\dagger)$, such that $2 \mathcal{M}_k =\mathcal{M}_k' - \sqrt{-1} \mathcal{M}_k''$.
Third, the generating sequence $\vec{\mathcal{Q}}$ contains enough relevant elements such that the objective $\mathcal{E}$ can be expanded over the resulting $\mathcal{M}_k$ with $b_0 \in \mathbb{R}$ and $\vec{b} \in \mathbb{R}^m$ (real coefficients because $\mathcal{E} = \mathcal{E}^\dagger$):
\begin{equation}
  \label{eq:expandobjective} \mathcal{E}= b_0 + \sum_{k=1}^m b_k \mathcal{M}_k \; .
\end{equation}

The linear independence of the $\mathcal{M}_k$ defines uniquely the matrices $A_k \in \mathbb{C}^{n \times n}$ such that:
\begin{equation}
  \label{eq:expandmatrix} \Xi = \sum_{k=0}^m A_k \mathcal{M}_k \;,
\end{equation}
and from $\Xi^\dagger = \Xi$ we get $A_k^\dagger = A_k$ for all $k$.

The name of the technique comes from considering the {\em moments}
\begin{equation}
\label{eq:defmoments}
    y_k = \langle \psi | \mathcal{M}_k | \psi \rangle.
\end{equation}
We have by normalization of $|\psi \rangle$ that $y_0 = \langle \psi | 1 | \psi \rangle = 1$, and we write $\vec{y} \in \mathbb{R}^m$ for the remaining non-constant moments.
The matrix $Z = Z^{\dag} \in \mathbb{C}^{n \times n}$, with $Z_{i j} = \langle \psi | Q_i Q_j^{\ast} | \psi \rangle$ can be written
\begin{equation}
    \label{eq: Z}
    Z = \left \langle \psi \middle | \Xi \middle | \psi \right \rangle = A_0 + \sum_{k = 1}^m y_k A_k\;.
\end{equation}

\begin{proposition}
\label{prop: Z sdp}
  The matrix $Z$ is semidefinite positive (written $Z \succeq 0$).
\end{proposition}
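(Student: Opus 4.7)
The plan is to show $\vec{v}^\dagger Z \vec{v} \geq 0$ for an arbitrary vector $\vec{v} \in \mathbb{C}^n$, by rewriting this quadratic form as the squared norm of a single vector in the underlying Hilbert space $\mathcal{H}$. Fixing $\vec{v}$ and using $Z_{ij} = \langle \psi | \mathcal{Q}_i \mathcal{Q}_j^* | \psi \rangle$, I would first pull the scalars inside the state evaluation and collect them into a single element
\begin{equation}
    \mathcal{W} = \sum_{i=1}^n v_i^* \mathcal{Q}_i \in \mathcal{P},
\end{equation}
so that $\vec{v}^\dagger Z \vec{v} = \langle \psi | \mathcal{W} \mathcal{W}^* | \psi \rangle$. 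This is a purely algebraic manipulation inside $\mathcal{P}$, using antilinearity of $*$ on the coefficients and the definition $\Xi = \vec{\mathcal{Q}} \cdot \vec{\mathcal{Q}}^\dagger$ from Eq.~\eqref{eq:defXi}.

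The next step is to interpret the abstract expression $\langle \psi | \mathcal{W} \mathcal{W}^* | \psi \rangle$ concretely on $\mathcal{H}$. The elements of $\mathcal{P}$ act on $\mathcal{H}$ through the Hermitian operators $\mathcal{A}_{a|x}, \mathcal{B}_{b|y}$, and the involution $*$ on $\mathcal{P}$ coincides with the Hermitian adjoint under this action — this is the compatibility that motivated defining $*$ by reversing monomials. Moreover, the fact that we are working in the quotient $\mathcal{P} = \mathcal{P}_{\text{free}}/\mathcal{I}$ causes no ambiguity, since every $\mathcal{R} \in \overline{\mathcal{R}}$ satisfies $\langle \psi | \mathcal{S} \mathcal{R} \mathcal{T} | \psi \rangle = 0$ for all $\mathcal{S}, \mathcal{T}$, as observed just before Eq.~\eqref{eq:ideal}.

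With this identification, writing $|\phi\rangle = \mathcal{W}^* |\psi\rangle \in \mathcal{H}$, we obtain
\begin{equation}
    \vec{v}^\dagger Z \vec{v} = \langle \psi | \mathcal{W} \mathcal{W}^* | \psi \rangle = \langle \mathcal{W}^* \psi | \mathcal{W}^* \psi \rangle = \| \phi \|^2 \geq 0,
\end{equation}
which holds for arbitrary $\vec{v}$ and establishes $Z \succeq 0$.

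The only subtle point — and really the only thing beyond routine bookkeeping — is the passage from the abstract $*$-algebra identity to an inner-product statement on $\mathcal{H}$: one must invoke the fact that the commuting-operator realization of the $\mathcal{X}_i$ provides a $*$-representation of $\mathcal{P}$ on $\mathcal{H}$, so that $\mathcal{W}^*$ is genuinely the operator adjoint of $\mathcal{W}$. Once this is acknowledged, the positivity is immediate and coordinate-free.
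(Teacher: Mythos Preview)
Your proof is correct and follows exactly the same approach as the paper: pick an arbitrary vector $\vec{v}\in\mathbb{C}^n$, set $\mathcal{W}=\vec{v}^\dagger\vec{\mathcal{Q}}$, and observe that $\vec{v}^\dagger Z\vec{v}=\langle\psi|\mathcal{W}\mathcal{W}^\ast|\psi\rangle\geq 0$. The paper compresses this into one line while you spell out the passage to $\|\mathcal{W}^\ast\psi\|^2$ and the well-definedness modulo $\mathcal{I}$, but the argument is identical.
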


\begin{proof}
  Let $\vec{w} \in \mathbb{C}^n$ be arbitrary. Then $\vec{w}^{\dag} Z \vec{w}
  = \langle \psi | \vec{w}^\dag \vec{\mathcal{Q}} \vec{\mathcal{Q}}^\dag \vec{w} | \psi \rangle = \langle \psi | \mathcal{W} \mathcal{W}^{\ast} | \psi \rangle
  \geqslant 0$ for $\mathcal{W}= \vec{w}^\dag \cdot \vec{\mathcal{Q}}$, and thus $Z$ is semidefinite positive.
\end{proof}

We combine the maximization~\eqref{eq:maxE} and the expansion~\eqref{eq:expandobjective}:
\begin{equation}
\label{eq:momentobj}
    E = b_0 + \sum_{k=1}^m b_k y_k = b_0 + \vec{b}^\top \vec{y},
\end{equation}
to finally write our semidefinite positive relaxation:
\begin{equation}
  \label{eq:momentsdp}
  \begin{split}
      d^\star = \max_{\vec{y} \in \mathbb{R}^m} \;\; &b_0 + \vec{b}^\top \vec{y} \\
      \text{ s.t. } \;\; & A_0 + \sum_{k = 1}^m y_k A_k \succeq 0.
  \end{split}
\end{equation}
It is a relaxation as the constraint $Z \succeq 0$, for a given generating
sequence, does not necessarily encode all the constraints that the moments satisfy.

\subsection{Sum-of-squares relaxations}
\label{sec:sumofsquares}

Another approach is to minimize $\mu \in \mathbb{R}$ such that $\mu - E =\langle \psi | \mu -\mathcal{E} | \psi \rangle \geqslant 0$. 
Writing $\mathcal{F}= \mu - \mathcal{E}$, we know that $\langle \psi | \mathcal{F} |
\psi \rangle \geqslant 0$ if $\mathcal{F}$ is a sum-of-squares (SOS):
\begin{equation}
    \label{eq:formalsumofsquares}
  \mathcal{F}= \sum_{\ell = 1}^r \mathcal{T}_{\ell}^{\ast} \mathcal{T}_{\ell}, \qquad \mathcal{T}_{\ell} \in \mathcal{P}.
\end{equation}
Any such $\mathcal{F}$ is a certificate that $\mu$ is an upper bound on the
objective value $E$ of the original problem. 
The goal is to find the tighest bound $\mu$, i.e. to minimize $\mu$ such that $\mathcal{F}= \mu - \mathcal{E}$ is a SOS.

To tackle this problem numerically, we restrict the elements $\mathcal{T}_{\ell}$ to be a linear combination of a given generating set $(\mathcal{Q}_i)_i$ for $i = 1, \ldots, n$.
For a given generating set, the set of potential SOS certificates is smaller than for the unrestricted case, and thus the bound $\mu$ we obtain may be not as tight as it could be. 
It is nevertheless valid.

We write $\mathcal{T}_\ell = \vec{\chi}_\ell^\dagger \cdot \vec{\mathcal{Q}}$, where $\vec{\mathcal{Q}}$ is the column vector with coefficients $(\mathcal{Q}_i)_i$.
Our problem becomes
\begin{eqnarray}
    \label{eq:Xsdp}
    \mathcal{F} &=& \sum_{\ell=1}^r (\vec{\mathcal{Q}}^\dag \cdot \vec{\chi}_\ell ) (\vec{\chi}_\ell^\dag \cdot \vec{\mathcal{Q}}) \nonumber \\
    &=&\vec{\mathcal{Q}}^\dag \cdot \underbrace{\left ( \sum_{\ell=1}^r  \vec{\chi}_\ell \cdot \vec{\chi}_\ell^\dag \right )}_{:= X} \cdot \vec{\mathcal{Q}} \nonumber \\
    & = & \operatorname{tr}[\Xi X]
\end{eqnarray}
and by definition $X$ is a semidefinite positive matrix.

We use again a sequence $(\mathcal{M}_k)_k$, $k = 0, \ldots, m$ satisfying $\mathcal{M}_0 = 1$, such that $\mathcal{E}$ can be expanded using $b_k$ as in Eq.~{\eqref{eq:expandobjective}}
\begin{equation}
 \mathcal{F} = \mu - b_0 - \sum_{k=1}^m b_k \mathcal{M}_k,
 \end{equation}
while $\Xi$ can be expanded in $A_k$ using Eq.~{\eqref{eq:expandmatrix}}:
 \begin{equation}
 \mathcal{F} = \operatorname{tr}[\Xi X] = \operatorname{tr}[A_0 X] + \sum_{k=1}^m \operatorname{tr}[A_k X] \mathcal{M}_k\;.
\end{equation}

We match the terms of two last equations, as the $\mathcal{M}_k$ are linearly independent.
We have $\mu = b_0 + \operatorname{tr}[A_0 X]$, which we minimize, and the equality constraints $\operatorname{tr} [A_k X] = - b_k$, with $X \succeq 0$. 
Our optimization problem is thus:
\begin{align}
  \label{eq:sossdp} 
   p^\star = \min_{X = X^{\dag} \in \mathbb{C}^{n \times n}}\;\; & b_0 + \operatorname{tr} [A_0 X]&  \nonumber\\
    \text{s.t. } \;\; & \operatorname{tr} [A_k X] = - b_k& k=1,\ldots,m, \nonumber\\
    &X \succeq 0,& 
\end{align}
which is also a semidefinite program.

\subsection{Convergence and duality}

The semidefinite programs~\eqref{eq:momentsdp} and~\eqref{eq:sossdp} correspond to the standard primal/dual forms of a semidefinite program~\cite{Vandenberghe_1996} with two small deviations: an additional constant factor $b_0$ and sign conventions.
Most semidefinite programming solvers work with a primal and dual pair of variables $(X, \vec{y})$, so it is not necessary to choose which formulation to solve.
When the value of the primal and dual SDPs coincide, i.e. $p^\star=d^\star$, then we say that {\em strong duality} holds. 
As the {\em generating set} grows in size, the SDP solution $u = p^\star = d^\star$ converges usually to the quantum supremum $\mu_{\mathcal{Q}}$ (technicalities are discussed in~\cite{Navascues_2008,Pironio_2010}).

Let us show what strong duality implies on the optimization variables of Eq.~\eqref{eq:sossdp} and Eq.~\eqref{eq:momentsdp}. 
We write the Lagrangian corresponding to our primal/dual problem:
\begin{equation}
        L = b_0 + \sum_{k=1}^m b_k y_k + \operatorname{tr}\big [ X(A_0+\sum_{k=1}^m y_k A_k ) \big]
\end{equation}
Next, we consider the supremum of $L$ over the variables of~\eqref{eq:momentsdp}:
\begin{equation}
    S = \sup_{\vec{y}} L.
\end{equation}
The last term of $L$ is for any particular solution $\vec{y}$ of the SDP~\eqref{eq:momentsdp} positive if $X\succeq0$. 
The two first terms in $L$ correspond to the objective function of the dual SDP. 
It follows that $S$ is an upper bound on the value of the dual. 
Let us assume that strong duality holds, then $S = L$ and the last positive term must vanish. 
Hence, if strong duality holds, we have complementarity:
\begin{equation}
    X(A_0+\sum_{k=1}^m y_k A_k) = X Z = \mathbf{0}.
    \label{eq: complementarity}
\end{equation}

\subsection{Certificates}
A solution $X$ of the SOS relaxation is a sort of certificate, as it provides an upper bound $p^\star$ on the maximum quantum value $\mu_{\mathcal{Q}}$.
The verification of such a certificate needs the following.
\begin{itemize}
    \item The matrix $X$ is semidefinite positive. The computation of the eigenvalues of $X$ can be expensive in exact arithmetic; for small $X$, the principal minor test~\cite{Prussing1986} is faster.
    \item The matrix $X$ encodes a certificate for the inequality $\mathcal{E}$ with the proper bound, i.e. that $\operatorname{tr}[\Xi X] = p^\star - \mathcal{E}$.
    This requires the computation of the matrix $\Xi$.
\end{itemize}

Alternatively, one looks for an explicit sum-of-squares decomposition of the form~\eqref{eq:formalsumofsquares}, which has the big advantage of being self-contained, and its validity is easily checked by expanding the products.
To obtain it, we factor $X \succeq 0$ using the Cholesky decomposition
\begin{equation}
\label{eq:chol}
  X = LDL^{\dag} = L \sqrt{D} \sqrt{D} L^\dag,
\end{equation}
where $L \in \mathbb{C}^{n \times r}$ and $D \in \mathbb{R}^{r \times r}$ is a
diagonal matrix with nonnegative elements.
This formulation postpones the extraction of square roots, requiring only field operations for its computation.
This is preferable when doing exact symbolical computations.
We put the decomposition in~\eqref{eq:Xsdp}:
\begin{equation}
    \mathcal{F}  = \vec{\mathcal{Q}}^\dag L \sqrt{D} \underbrace{\sqrt{D} L^\dag \vec{\mathcal{Q}}}_{=\vec{\mathcal{T}}} = \vec{\mathcal{T}}^\dag \vec{\mathcal{T}} 
\end{equation}
%    \nonumber \\
%    &  = \sum_{\ell=1}^r \mathcal{T}_\ell^\ast %\mathcal{T}_\ell
%\end{align}
which corresponds to the form~\eqref{eq:formalsumofsquares} if we define 
\begin{equation}
\label{eq: certificateElements}
\mathcal{T}_\ell = \sqrt{D_{\ell\ell}} \sum_i L^\ast_{i \ell} \mathcal{Q}_i .   
\end{equation}
Note that in~\eqref{eq:formalsumofsquares}, one can take the factor $D_{\ell \ell}$ out of the $\mathcal{T}^\ast_\ell \mathcal{T}_\ell$ product, and thus we never need to extract the square root.

In the present paper, we are interested in Bell inequalities where the hierarchy converges after a finite number of steps and where the coefficients of $X$ are somehow easy to handle (for example, in a suitable extension of the rationals).
We will present an explicit SOS decomposition for most of the examples we consider, except in the case where the decomposition is too cumbersome to be checked by hand.
In such cases, we provide the solution $X$ and check its validity using the principal minor test.

\subsection{Example: CHSH}
\label{sec:chshnonsymexample}
We remember the inequality~\eqref{eq:chshobjective}:
\begin{eqnarray}
\label{eq:chshexpanded}
    \mathcal{E} & = & 2 - 4 \mathcal{A}_{0|1} - 4 \mathcal{B}_{0|0} + 4\mathcal{A}_{0|0}\mathcal{B}_{0|0} + 4\mathcal{A}_{0|1}\mathcal{B}_{0|0} \nonumber \\
    & & - 4 \mathcal{A}_{0|0} \mathcal{B}_{0|1} + 4 \mathcal{A}_{0|1} \mathcal{B}_{0|1}
\end{eqnarray}
and consider a semidefinite relaxation using the generating sequence
\begin{equation}
\label{eq: CHSHQ}
    \vec{\mathcal{Q}} = \left( 1,\mathcal{A}_{0|1},\mathcal{A}_{1|1},\mathcal{B}_{0|1},\mathcal{B}_{1|1} \right)^\top \;,
\end{equation}
which spans the space of all polynomials of degree at most one.
The symbolic matrix $\Xi_{ij} = \mathcal{Q}_i \mathcal{Q}_j^\ast$ takes the form
 \begin{equation}
     \scalemath{1}{
 \Xi = 
     \begin{pmatrix}
        1 & \mathcal{A}_{0|0} & \mathcal{A}_{0|1} & \mathcal{B}_{0|0} & \mathcal{B}_{0|1} \\
        \mathcal{A}_{0|0} & \mathcal{A}_{0|0} & \mathcal{A}_{0|0}\mathcal{A}_{0|1} & \mathcal{A}_{0|0}\mathcal{B}_{0|0} & \mathcal{A}_{0|0}\mathcal{B}_{0|1} \\
        \mathcal{A}_{0|1} & \mathcal{A}_{0|1}\mathcal{A}_{0|0}& \mathcal{A}_{0|1} & \mathcal{A}_{0|1}\mathcal{B}_{0|0} & \mathcal{A}_{0|1}\mathcal{B}_{0|1} \\
        \mathcal{B}_{0|0} & \mathcal{A}_{0|0}\mathcal{B}_{0|0} & \mathcal{A}_{0|1}\mathcal{B}_{0|0} & \mathcal{B}_{0|0} & \mathcal{B}_{0|0}\mathcal{B}_{0|1} \\
        \mathcal{B}_{0|1} & \mathcal{A}_{0|0}\mathcal{B}_{0|1} & \mathcal{A}_{0|1}\mathcal{B}_{0|1} & \mathcal{B}_{0|1}\mathcal{B}_{0|0} & \mathcal{B}_{0|1}
    \end{pmatrix}.}
\end{equation}
Not all entries of $\Xi$ are linearly independent. Let $\mathcal{M}=$ $\{1$, $\mathcal{A}_{0|0}$, $\mathcal{A}_{0|1}$, $\mathcal{B}_{0|0}$, $\mathcal{B}_{0|1}$, $\mathcal{A}_{0|0}\mathcal{A}_{0|1}+\mathcal{A}_{0|1}\mathcal{A}_{0|0}$, $i(\mathcal{A}_{0|0}\mathcal{A}_{0|1}-\mathcal{A}_{0|1}\mathcal{A}_{0|0})$, $\mathcal{B}_{0|0}\mathcal{B}_{0|1}+\mathcal{B}_{0|1}\mathcal{B}_{0|0}$, $i(\mathcal{B}_{0|0}\mathcal{B}_{0|1}-\mathcal{B}_{0|1}\mathcal{B}_{0|0})$, $\mathcal{A}_{0|0}\mathcal{B}_{0|0}$, $\mathcal{A}_{0|1}\mathcal{B}_{0|0}$, $\mathcal{A}_{0|0}\mathcal{B}_{0|1}$, $\mathcal{A}_{0|1}\mathcal{B}_{0|1}\}$ be the set of linearly independent elements of $\Xi$. The set $\mathcal{M}$ has of a cardinality of 13 (with index $k=0,...,12$).
The inequality~\eqref{eq:chshexpanded} is rewritten as in Eq.~\eqref{eq:expandobjective}:
\begin{equation}
    \mathcal{E} = 2 \mathcal{M}_0 - 4 \mathcal{M}_2 - 4 \mathcal{M}_3 + 4 \mathcal{M}_9 - 4 \mathcal{M}_{10} + 4 \mathcal{M}_{11} + 4 \mathcal{M}_{12}\;.
\end{equation}

We can expand $\Xi$ with respect to this set such as in Eq.~\eqref{eq:expandmatrix}.
Let us consider the the moments $y_k = \bra{\psi}\mathcal{M}_{k}\ket{\psi}$ with $k\in\{0,1,...,12\}$. As $\mathcal{M}_0 = 1$, we have $y_0=1$.
The moment matrix in Eq.~\eqref{eq: Z} reads:
\begin{equation}
    Z = 
        \begin{pmatrix}
            1 & y_1 & y_2 & y_3 & y_4 \\
            y_1 & y_1 & \frac{y_5-i y_6}{2} & y_9 & y_{10} \\
            y_2 & \frac{y_5+i y_6}{2} & y_2 & y_{11} & y_{12} \\
            y_3 & y_9 & y_{11} & y_3 & \frac{y_7-iy_8}{2} \\
            y_4 & y_{10} & y_{12} & \frac{y_7+iy_8}{2}  & y_4
        \end{pmatrix}.
\end{equation}
Using Eq.~\eqref{eq: Z} it is straightforward to recover the matrices $(A_k)_k$. 
As already proven in Prop.~\ref{prop: Z sdp}, $Z$ is semidefinite positive. 
The optimization problem takes the form :
\begin{equation}
    \begin{split}
        d^\star=\max_{\vec{y}} \;\; &2-4y_2-4y_3+4y_9-4y_7+4y_{11}+4y_{12}\\
        \text{s.t.} \;\; &Z \succeq 0.
    \end{split}
\end{equation}
Solving the SDP numerically yields the well known Tsirelson bound $d^\star=2\sqrt{2}$~\cite{cirel1980}. 
The optimal matrix $Z^\star$ is of rank 3 and hence due to the complementarity slackness~\eqref{eq: complementarity}, we expect $X^\star$ to be of rank 2.

From the numerical solution of the primal form~\eqref{eq:sossdp}, we guess using the ring of integers $\mathbb{Z}\{\sqrt{2}\}$ the optimum $p^\star=2\sqrt{2}$, and the form of the SDP variable:
\begin{equation}
\label{eq:primalsolutionchsh}
    X^\star = 
        \begin{pmatrix}
             2 \left(\sqrt{2}-1\right) & -\sqrt{2} & 2-\sqrt{2} & 2-\sqrt{2} & -\sqrt{2} \\
             -\sqrt{2} & 2 \sqrt{2} & 0 & -2 & 2 \\
             2-\sqrt{2} & 0 & 2 \sqrt{2} & -2 & -2 \\
             2-\sqrt{2} & -2 & -2 & 2 \sqrt{2} & 0 \\
            -\sqrt{2} & 2 & -2 & 0 & 2 \sqrt{2} \\
         \end{pmatrix}
.
\end{equation}

Let us compute the Choleski decomposition of $X^\star$ 
\begin{equation}
\label{eq:LDLchsh}
        L = \begin{pmatrix}
             1 & 0 \\
             -\frac{c_+}{\sqrt{2}} & 1 \\
             \frac{1}{\sqrt{2}} & c_+  \\
             \frac{1}{\sqrt{2}} & -c_+ \\
             -\frac{c_+}{\sqrt{2}} & -1 
            \end{pmatrix}, \;\;
        D = \begin{pmatrix}
                 2 c_- & 0\\
                 0 & c_- 
             \end{pmatrix}
\end{equation}
where $c_\pm=\sqrt{2} \pm 1$.
From the Choleski decomposition, one sees that the SOS will only contain two non-zero terms. 
It is as previously already mentioned, what was expected due to the complementarity slackness with respect to the moment-based approach. 
Using Eq.~\eqref{eq: certificateElements}, it is straightforward to recover the certificate elements and the SOS reads
\begin{eqnarray}
    \label{eq: CHSHSOSmomentbasedrelaxation}
        2\sqrt{2}-\mathcal{E} & = & 
        c_- \bigg[ \|\sqrt{2}-c_+
   \mathcal{A}_{0|0}+\mathcal{A}_{0|1}+\mathcal{B}_{0|0}-c_+ \mathcal{B}_{0|1} \|^2 \nonumber \\
   & & + \|\mathcal{A}_{0|0}+c_+ \mathcal{A}_{0|1}-c_+ \mathcal{B}_{0|0}-\mathcal{B}_{0|1}\|)^2\bigg].
\end{eqnarray}

In general, it is not easy to guess the analytical form of the SDP variables from the numerical results.
In the next section, we will harness the symmetries present in the optimization problem to reduce the number of variables and make the recovery of an analytical solution easier.

\section{Symmetries: definitions}
\label{sec:symmetriesdefs}
Many Bell inequalities have symmetries, as do many polynomial optimization problems.
Those symmetries reduce drastically the computational burden of solving the corresponding semidefinite relaxations.
We formalize below an approach that can be used with general noncommutative polynomial rings.
More effective approaches are available in particular cases, such as commutative rings~\cite{Gatermann_2004}.

\subsection{Real versus complex solutions}
\label{sec:realvscomplex}
We start with a warm-up.
Some semidefinite problems have symmetry under complex conjugation.
For example, when optimizing the violation of a Bell inequality, one can always assume that the operators and the state are real.
We generalize this case.

Assume that the following conditions hold.
First, $A_0 = A_0^\ast$ (remember that $(\cdot)^\ast$ is the complex conjugate). Second, there exists a matrix $F \in  \mathbb{R}^{m \times m}$ such that
\begin{eqnarray}
    Z^\ast & = & A_0 + \sum_{k=1}^m y_k A_k^\ast \nonumber \\
    & = & A_0 + \sum_{k=1}^m (F \vec{y})_k A_k\;.
\end{eqnarray}
Then, $F$ represents the complex conjugation expressed using the moments $\vec{y}$.
Third, we have $\vec{b}^\top = \vec{b}^\top F$: the objective is invariant under complex conjugation.

If those conditions hold, for any feasible (respectively optimal) solution, $\vec{y}$, we have that $F \vec{y}$ is also a feasible (respectively optimal) solution. By convexity, we derive a new feasible (respectively optimal) solution $\overline{y}$:
\begin{equation}
\label{eq:averagingconjugate}
    \overline{y} = \frac{\vec{y} + F \vec{y}}{2}\;.
\end{equation}
In the example of Section.~\ref{sec:chshnonsymexample}, as with all Bell inequality optimization problems, we have this symmetry. The matrix $F$ reads $F = \operatorname{diag}(1,1,1,1,1,1,-1,1,-1,1,1,1,1)$, and averaging the solution and its conjugate as in~\eqref{eq:averagingconjugate} simply sets $y_6 = y_8 = 0$, thus reducing the problem size.

We now consider more complicated symmetries at the level of the polynomial ring, and we will use again the idea of averaging over a set of solutions as in~\eqref{eq:averagingconjugate}.

\subsection{Symmetries of polynomial rings}
\label{subsec:sym poly ring}
Let $\mathcal{P}$ be a polynomial ring as defined in Section~\ref{subsec: quotient poly ring}.
An endomorphism of $\mathcal{P}$ is a linear map $\varphi : \mathcal{P}\rightarrow \mathcal{P}$ which additionally preserves the multiplicative
\begin{equation}
  \label{eq:morphismmultiplicative} \varphi (\mathcal{S}\mathcal{T}) = \varphi
  (\mathcal{S}) \varphi (\mathcal{T})
\end{equation}
and adjoint structure $\varphi (\mathcal{S}^{\dag}) = \varphi (\mathcal{S})^{\dag}$
for all $\mathcal{S}, \mathcal{T} \in \mathcal{P}$. 
From Eq.~{\eqref{eq:morphismmultiplicative}}, we deduce that the morphism $\varphi$ is fully defined by the operator images $\varphi (\mathcal{X}_1),\ldots, \varphi (\mathcal{X}_N)$. We now consider {\em affine} endomorphisms which satisfy $\deg \varphi (\mathcal{X}_i) \leqslant 1$ and $\varphi(1) = 1$. 
As symmetries are reversible transformations, we require $\phi$ to be invertible, i.e. that it is an affine {\em automorphism} of $\mathcal{P}$.
Any affine automorphism can be described using the inverse $\omega^{-1}$ of a $(N + 1) \times (N + 1)$ matrix $\omega$:
\begin{equation}
\label{eq:affinegroupmatrix}
\begin{pmatrix}
     1\\
     \varphi (\mathcal{X}_1)\\
     \vdots\\
     \varphi (\mathcal{X}_N)
\end{pmatrix} =
\begin{pmatrix}
     1 & 0 & \!\cdots\! & 0\\
     (\omega^{-1})_{21} & (\omega^{-1})_{22} &  & (\omega^{-1})_{2 N}\\
     \vdots &  &  & \vdots\\
     (\omega^{-1})_{N 1} & (\omega^{-1})_{N 2} & \!\cdots\! & (\omega^{-1})_{N N}
\end{pmatrix}
\begin{pmatrix}
     1\\
     \mathcal{X}_1\\
     \vdots\\
     \mathcal{X}_N
\end{pmatrix}
\end{equation}
where the first row of both $\omega$ and $\omega^{-1}$ is fixed to $(1, 0, \ldots, 0)$.
Such automorphisms, equipped with composition of morphism form $\Omega$, the (affine) automorphism group of $\mathcal{P}$.

Why the use of the inverse matrix in the definition~\eqref{eq:affinegroupmatrix}?
We write $\vec{\mathcal{X}} = (1, \mathcal{X}_1, \ldots, \mathcal{X}_N)^\top$, and, by convention, $\varphi(\vec{\mathcal{X}}) = (1, \varphi(\mathcal{X}_1), \ldots, \varphi(\mathcal{X}_N))^\top$ is applied element-by-element.
We consider two affine automorphisms $\varphi_1$ and $\varphi_2$ with corresponding matrices $\omega_1^{-1}$ and $\omega_2^{-1}$ such that
\begin{equation}
    \varphi_1(\vec{\mathcal{X}}) = \omega_1^{-1} \mathcal{X}, \qquad
    \varphi_2(\vec{\mathcal{X}}) = \omega_2^{-1} \mathcal{X}.
\end{equation}
Hence, the composition of morphisms is compatible with matrix multiplication:
\begin{align}
  \label{eq:matrixautomorphism}
    (\varphi_1 \circ \varphi_2) (\vec{\mathcal{X}}) & = \varphi_1 ( \omega_2^{-1} \vec{\mathcal{X}} ) = \omega_2^{-1} \varphi_1(\vec{\mathcal{X}}) = \omega_2^{-1} \omega_1^{-1} \vec{\mathcal{X}} \nonumber \\
    & = (\omega_1\omega_2)^{-1} \vec{\mathcal{X}}.
\end{align}

In the CHSH example~\eqref{eq:chshexpanded}, the automorphism group is generated by the three operations:
\begin{equation}
\label{eq:CHSHexamplesym}
\underbrace{\begin{array}{rl}
    A_{0|0} & \rightarrow 1 - A_{0|0} \\
    A_{0|1} &\rightarrow A_{0|1} \\
    B_{0|0} &\rightarrow B_{0|0} \\
    B_{0|1} & \rightarrow B_{0|1}
      \end{array}}_{\varphi_2}\;,
\underbrace{\begin{array}{rl}
    A_{0|0} & \rightarrow A_{0|1} \\
    A_{0|1} &\rightarrow A_{0|0} \\
    B_{0|0} &\rightarrow B_{0|0} \\
    B_{0|1} & \rightarrow B_{0|1}
      \end{array}}_{\varphi_3}\;,
\underbrace{\begin{array}{rl}
    A_{0|0} & \rightarrow B_{0|0} \\
    A_{0|1} &\rightarrow B_{0|1} \\
    B_{0|0} &\rightarrow A_{0|0} \\
    B_{0|1} & \rightarrow A_{0|1}
      \end{array}}_{\varphi_4}\;.
\end{equation}
Note that we use the indices $2,3,4$ so that these generators are compatible with the notation~\eqref{eq:cglmp:automorphismgenerators} used later.

Alternatively, one can describe this group as a matrix group generated by
\begin{equation}
    \begin{aligned}
        &\omega_2^{-1} = \omega_2 = \begin{pmatrix*}[r]
        1 & 0 \\
        1 & -1
    \end{pmatrix*} \oplus \mathbb{1}_3, 
            &\omega_3^{-1}& = \omega_3 = 1 \oplus \sigma_1 \oplus \mathbb{1}_2,  \\
            &\omega_4^{-1} = \omega_4 = 1 \oplus (\sigma_1 \otimes \mathbb{1}_2), 
            &\sigma_1& := \begin{pmatrix*}[r] 0 & 1 \\ 1 & 0 \end{pmatrix*}.
    \end{aligned}
\end{equation}

Those two groups are isomorphic and of order $|G| = 128$, with presentation
$\Omega^{\text{CHSH}} = \langle \omega_2, \omega_3, \omega_4 | \omega_2^2 = \omega_3^2 = \omega_4^2 = (\omega_2 \omega_3)^4 = (\omega_2 \omega_4)^4 = (\omega_3 \omega_4)^4 = (\omega_3 \omega_4 \omega_2 \omega_4)^2 = \mathbb{1} \rangle $.

For the rest of this paper, we focus on affine automorphism groups, and describe them using the matrices~\eqref{eq:affinegroupmatrix}.

\subsection{Symmetries of optimization problems}
\label{subsec:sym opt prob}
Given a polynomial ring $\mathcal{P}$ and an objective function $\mathcal{E}$ as in Eq.~\eqref{eq:objectivefunction}, we define
\begin{equation}
\label{eq:defsymgroup}
    G = \left\{ g \in \Omega \middle | \mathcal{E} = g(\mathcal{E}) \right\}
\end{equation}
which is the symmetry group of the optimization problem.

For the CHSH objective~\eqref{eq:chshexpanded}, the symmetry group is the dihedral group order $16$; it is generated by the matrices $a$ and $x$
\begin{equation}
\label{eq:CHSHgenerators}
    a = \omega_2 \omega_3 \omega_4, \qquad x = \omega_3 \omega_4 \omega_3,
\end{equation}
with the invariance easily checked in the notation of~\eqref{eq:CHSHexamplesym}:
\begin{equation}
    \varphi_4(\varphi_3(\varphi_2(\mathcal{E})))=\varphi_3(\varphi_4(\varphi_3(\mathcal{E})))=\mathcal{E}.
\end{equation}
These matrices correspond to the standard presentation
\begin{equation}
\label{eq:Gchsh}
   G = \left \langle a, x \middle | a^8 = x^2 = x a x a = \mathbb{1} \right \rangle\;.
\end{equation}
If we viewed this dihedral group as acting on the octogon, $a$ would correspond to a rotation by 45 degrees, while $x$ would be the mirror symmetry.

Note that we do not address the question of symmetry discovery in the present paper.
In the case of Bell inequalities, we refer the reader to the permutation group approach proposed in~\cite{Rosset_2014}.

\subsection{Invariant Hermitian matrices}
How do ring automorphisms affect moment matrices and sum-of-square certificates?
In the derivations below, we assume that the vector space spanned by the generating sequence $(\mathcal{Q}_i)_i$ is closed under the action of $G$: for every $g \in G$, there is a matrix $\rho_g \in \mathbb{C}^{n \times n}$ such that:
\begin{equation}
  \label{eq:defrho}
  g(\vec{\mathcal{Q}}) = \rho_g^{-1} \vec{\mathcal{Q}}\;.
\end{equation}
As in~\eqref{eq:matrixautomorphism}, this definition ensures that the representation of the group $\rho: G \to \operatorname{GL}(\mathbb{C}, n)$ is compatible with the matrix multiplication
\begin{equation}
  (g h)(\vec{\mathcal{Q}}) = g( \rho_h^{-1} \vec{\mathcal{Q}} ) = \rho_h^{-1} g(\vec{\mathcal{Q}}) = \rho_h^{-1} \rho_g^{-1} \vec{\mathcal{Q}}\;,
\end{equation}
and as $(gh)(\vec{\mathcal{Q}}) = \rho_{gh}^{-1} \vec{\mathcal{Q}}$, we have $\rho_{g h} = \rho_g \rho_h$.
This representation has a decomposition~\cite{Serre}:
\begin{equation}
    \label{eq:rhogirrep}
    \rho_g = I \underbrace{\left( \bigoplus_{i=1}^N \mathbb{1}_{m_i} \otimes \sigma_g^{(i)} \right )}_{= \sigma_g} I^{-1}\;,
\end{equation}
where $I \in \mathbb{C}^{n \times n}$ and the representations $\sigma_g^{(i)}: G \to \operatorname{GL}(\mathbb{C}, d_i)$ are irreducible and inequivalent, but not necessarily unitary~\footnote{All irreducible representations of finite groups can be written using coefficients in the cyclotomic field; but that is not necessarily the case when requiring those irreducible representations to be unitary.}.
The $N$ irreducible representations appearing in $\rho_g$, indexed by $i$, have dimension $d_i$ and multiplicity $m_i$.

We now consider Hermitian matrices that have invariance properties under the action of $G$.
\begin{proposition}
\label{prop:blockdiag}
Let $\overline{X}, \overline{Z} \in \mathbb{C}^{n \times n}$ be Hermitian matrices satisfying:
\begin{equation}
     \overline{X} = \rho_g^\dag \overline{X} \rho_g, \quad \overline{Z} = \rho_g \overline{Z} \rho_g^\dag, \quad \forall g \in G\;.
\end{equation}
Then:
\begin{eqnarray}
    \widetilde{X}^{\text{full}} =& I^\dagger \overline{X} I &= \bigoplus_{i=1}^N \widetilde{X}^{(i)} \otimes C_i, \nonumber \\
    \widetilde{Z}^{\text{full}} =& I^{-1} \overline{Z} I^{-\dagger} &= \bigoplus_{i=1}^N \widetilde{Z}^{(i)} \otimes C_i^{-1},
\end{eqnarray}
where $C^i \in \mathbb{C}^{d_i \times d_i}$ are Hermitian and positive definite constant matrices, while the Hermitian matrices $\widetilde{X}^{(i)}, \widetilde{Z}^{(i)} \in \mathbb{C}^{m_i \times m_i}$ parameterize fully the space of $\overline{X}$ and $\overline{Z}$.

If any of the $\sigma^i$ is already unitary, the corresponding $C_i$ is a positive multiple of the identity matrix.
\end{proposition}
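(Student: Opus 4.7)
The plan is to reduce the statement to a non-unitary variant of Schur's lemma applied blockwise. First I would push the two invariance conditions through the change of basis $I$: using $\rho_g = I \sigma_g I^{-1}$ from Eq.~\eqref{eq:rhogirrep}, a direct substitution shows that $\widetilde{X}^{\text{full}} = I^\dagger \overline{X} I$ and $\widetilde{Z}^{\text{full}} = I^{-1} \overline{Z} I^{-\dagger}$ satisfy
\begin{equation}
    \sigma_g^\dagger \widetilde{X}^{\text{full}} \sigma_g = \widetilde{X}^{\text{full}}, \qquad \sigma_g \widetilde{Z}^{\text{full}} \sigma_g^\dagger = \widetilde{Z}^{\text{full}}, \quad \forall g \in G.
\end{equation}
The block structure $\sigma_g = \bigoplus_i \mathbb{1}_{m_i} \otimes \sigma_g^{(i)}$ then partitions $\widetilde{X}^{\text{full}}$ into blocks $\widetilde{X}^{(ij)}$ of size $(m_i d_i) \times (m_j d_j)$ obeying $(\mathbb{1}_{m_i} \otimes \sigma_g^{(i)\dagger}) \widetilde{X}^{(ij)} (\mathbb{1}_{m_j} \otimes \sigma_g^{(j)}) = \widetilde{X}^{(ij)}$. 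Expanding each block as $\widetilde{X}^{(ij)} = \sum_{a,b} E_{ab} \otimes W_{ab}^{(ij)}$ with $E_{ab}$ the matrix units of $\mathbb{C}^{m_i \times m_j}$, the invariance tensors apart to give $W_{ab}^{(ij)} \sigma_g^{(j)} = \sigma_g^{(i)-\dagger} W_{ab}^{(ij)}$, i.e.\ each $W_{ab}^{(ij)}$ intertwines $\sigma^{(j)}$ with the contragredient representation $\sigma^{(i)-\dagger}$.

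The key construction is the averaged Gram matrix
\begin{equation}
    C_i := \frac{1}{|G|} \sum_{g \in G} \sigma_g^{(i)\dagger} \sigma_g^{(i)},
\end{equation}
which is manifestly Hermitian positive definite and, via the reindexing $g \mapsto g h$, satisfies $\sigma_h^{(i)\dagger} C_i \sigma_h^{(i)} = C_i$ for every $h$. This identity exhibits $C_i$ as an explicit intertwiner from $\sigma^{(i)}$ to $\sigma^{(i)-\dagger}$, proving that the contragredient is equivalent to $\sigma^{(i)}$ itself even in the non-unitary setting. Applying Schur's lemma to $W_{ab}^{(ij)}$: for $i \neq j$, the irreps $\sigma^{(j)}$ and $\sigma^{(i)-\dagger}\sim\sigma^{(i)}$ are inequivalent, forcing $W_{ab}^{(ij)} = 0$ and killing every off-diagonal block; for $i = j$, the one-dimensional intertwiner space is spanned by $C_i$, so $W_{ab}^{(ii)} = \widetilde{X}^{(i)}_{ab} C_i$ for scalars $\widetilde{X}^{(i)}_{ab}$, yielding $\widetilde{X}^{(ii)} = \widetilde{X}^{(i)} \otimes C_i$ with $\widetilde{X}^{(i)} \in \mathbb{C}^{m_i \times m_i}$.

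The $\widetilde{Z}$ case follows by the same method applied to the dual invariance $\sigma_g W \sigma_g^\dagger = W$. Inverting the identity satisfied by $C_i$ shows that $C_i^{-1}$ solves the dual invariance, so by Schur it spans the one-dimensional intertwiner space, giving $\widetilde{Z}^{(ii)} = \widetilde{Z}^{(i)} \otimes C_i^{-1}$ (any overall scalar being absorbed into $\widetilde{Z}^{(i)}$). Hermiticity of $\widetilde{X}^{(i)}$ and $\widetilde{Z}^{(i)}$ transfers from that of $\overline{X}, \overline{Z}$ via the Hermiticity of $C_i$ and $C_i^{-1}$. Finally, if $\sigma^{(i)}$ is unitary, every summand in the definition of $C_i$ equals $\mathbb{1}_{d_i}$, so $C_i = \mathbb{1}_{d_i}$, a positive multiple of the identity.

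The main obstacle I anticipate is the careful direction-tracking of the intertwining relations: because $\sigma_g^{(i)}$ is not assumed unitary, one cannot conflate $\sigma^{(i)\dagger}$ with $\sigma^{(i)-1}$, and the step identifying $\sigma^{(i)-\dagger}$ with $\sigma^{(i)}$ up to equivalence must go through the explicit averaged Gram matrix $C_i$. This is precisely the point where standard treatments assuming unitarity (such as~\cite{Serre}) simplify and our statement requires an extra argument.
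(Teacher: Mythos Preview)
Your argument is correct and reaches the same conclusion, but it takes a genuinely different route from the paper. The paper first unitarizes each irreducible block via Weyl's trick, choosing matrices $T_i$ with $\sigma_g^{(i)} = T_i \tau_g^{(i)} T_i^{-1}$ and $\tau^{(i)}$ unitary; after conjugating by $T = \bigoplus_i \mathbb{1}_{m_i}\otimes T_i$, the invariance of $\widetilde{X}^{\text{full}}$ becomes an honest commutation relation $[\tau_g, T^\dagger \widetilde{X}^{\text{full}} T]=0$, to which the textbook Schur commutant theorem applies directly, and one reads off $C_i = T_i^{-\dagger}T_i^{-1}$ upon conjugating back. You instead stay in the non-unitary picture and interpret each subblock as an intertwiner from $\sigma^{(j)}$ to the conjugate-contragredient $\sigma^{(i)-\dagger}$, supplying the missing equivalence $\sigma^{(i)}\cong\sigma^{(i)-\dagger}$ through the explicit averaged Gram matrix $C_i = |G|^{-1}\sum_g \sigma_g^{(i)\dagger}\sigma_g^{(i)}$. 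The two $C_i$'s agree up to a positive scalar (absorbed into $\widetilde{X}^{(i)}$), since any two unitarizers differ by a scalar times a unitary. What your approach buys is a canonical, computable formula for $C_i$ without having to produce a unitarizing $T_i$; what the paper's approach buys is that positive-definiteness and Hermiticity of $C_i$ are immediate from its factored form $T_i^{-\dagger}T_i^{-1}$, and the Schur step is the familiar commutant version rather than the intertwiner version. Both handle the non-unitary subtlety you flag at the end; they simply package it differently.
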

\begin{proof}
See Appendix~\ref{app:proof:blockdiag}.
\end{proof}

What is the impact of this last proposition? 
Writing $\overline{X} \succeq 0$ for $\overline{X}$ being semidefinite positive (SDP), we will solve later semidefinite programs where invariant matrices satisyfing the conditions of $\overline{X}$ and $\overline{Z}$ are constrained to be SDP.

In Proposition~\ref{prop:blockdiag}, we realize that as $I$ is full-rank, we have $\overline{X} \succeq 0$ if and only if $\widetilde{X}^{\text{full}} \succeq 0$, and the same holds for $\overline{Z} \succeq 0$ if and only if $\widetilde{Z}^{\text{full}} \succeq 0$ (this follows from the existence of a Cholesky decomposition, with $I$ acting as a change of basis matrix).
However, by Proposition~\ref{prop:blockdiag}, the matrices $\widetilde{X}^{\text{full}}$ and $\widetilde{Z}^{\text{full}}$ are block-diagonal, and thus the eigenvalue condition is simpler to solve.

Defining the the Hermitian matrices:
\begin{equation}
    \widetilde{X} = \bigoplus_{i=1}^N \widetilde{X}^{(i)}, \qquad \widetilde{Z} = \bigoplus_{i=1}^N \widetilde{Z}^{(i)},
\end{equation}
we have the following proposition.

\begin{proposition}
\label{prop: sdpblocs}
We have $\overline{X} \succeq 0$ if and only if $\widetilde{X} \succeq 0$.
We have as well $\overline{Z} \succeq 0$ if and only if $\widetilde{Z} \succeq 0$.
\end{proposition}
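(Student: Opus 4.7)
The plan is to establish each equivalence by a chain of three standard linear-algebra reductions; I focus on $\overline{X}$ since the argument for $\overline{Z}$ is identical up to replacing $C_i$ by $C_i^{-1}$, which is also Hermitian positive definite. The main obstacle is more notational than mathematical: one must carefully track that every auxiliary matrix is indeed invertible (so that congruence preserves inertia) and that the Kronecker factors $C_i$ are strictly positive definite rather than just positive semidefinite.

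First, since $I \in \mathbb{C}^{n \times n}$ is invertible, the $\ast$-congruence $\overline{X} \mapsto I^\dagger \overline{X} I = \widetilde{X}^{\text{full}}$ preserves the inertia by Sylvester's law, so $\overline{X} \succeq 0$ if and only if $\widetilde{X}^{\text{full}} \succeq 0$. Proposition~\ref{prop:blockdiag} then gives the block-diagonal form $\widetilde{X}^{\text{full}} = \bigoplus_i \widetilde{X}^{(i)} \otimes C_i$, and a block-diagonal Hermitian matrix is positive semidefinite if and only if each of its blocks is. Thus the problem reduces to showing $\widetilde{X}^{(i)} \otimes C_i \succeq 0 \iff \widetilde{X}^{(i)} \succeq 0$ for each irrep $i$.

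For this last equivalence, I invoke the fact that $C_i \succ 0$ admits a factorization $C_i = K_i K_i^\dagger$ with $K_i \in \mathbb{C}^{d_i \times d_i}$ invertible (Cholesky). Writing
\begin{equation}
\widetilde{X}^{(i)} \otimes C_i = \bigl(\mathbb{1}_{m_i} \otimes K_i\bigr)\bigl(\widetilde{X}^{(i)} \otimes \mathbb{1}_{d_i}\bigr)\bigl(\mathbb{1}_{m_i} \otimes K_i\bigr)^\dagger,
\end{equation}
we recognize another congruence by the invertible matrix $\mathbb{1}_{m_i} \otimes K_i$, so $\widetilde{X}^{(i)} \otimes C_i \succeq 0$ iff $\widetilde{X}^{(i)} \otimes \mathbb{1}_{d_i} \succeq 0$. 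Since the spectrum of $\widetilde{X}^{(i)} \otimes \mathbb{1}_{d_i}$ is that of $\widetilde{X}^{(i)}$ with each eigenvalue appearing $d_i$ times (equivalently, it is permutation-similar to a direct sum of $d_i$ copies of $\widetilde{X}^{(i)}$), it is PSD iff $\widetilde{X}^{(i)} \succeq 0$. Reassembling the blocks yields $\overline{X} \succeq 0 \iff \widetilde{X} \succeq 0$, and the corresponding statement for $\overline{Z}$ follows from the same argument applied to the congruence $\overline{Z} \mapsto I^{-1} \overline{Z} I^{-\dagger}$, with $C_i$ replaced throughout by $C_i^{-1} \succ 0$.
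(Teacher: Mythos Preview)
Your proof is correct and follows essentially the same strategy as the paper: reduce via congruence by $I$ and block-diagonality to the sublemma that $A \otimes B \succeq 0 \iff A \succeq 0$ whenever $B \succ 0$. The only cosmetic difference is in how that sublemma is argued---the paper tensors the eigenvalue decompositions $A = UDU^\dagger$, $B = VEV^\dagger$ and reads off the eigenvalues $d_j e_k$, whereas you factor $C_i = K_i K_i^\dagger$ and apply Sylvester's law of inertia once more to reduce to $\widetilde{X}^{(i)} \otimes \mathbb{1}_{d_i}$.
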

\begin{proof}
We first prove the following.
Let $A$ be a Hermitian matrix and $B$ be a positive definite Hermitian matrix.
Then $(A \otimes B) \succeq 0$ if and only if $A \succeq 0$.
We have the eigenvalue decompositions $A = U D U^\dagger$ and $B = V E V^\dagger$, so that
\begin{equation}
    A \otimes B = (U \otimes V) (D \otimes E) (U \otimes V)^\dagger
\end{equation}
and as $D = \operatorname{diag}(\vec{d})$, $E = \operatorname{diag}(\vec{e})$, we have $(A \otimes B) \succeq 0$ if and only if $\vec{d} \otimes \vec{e} \ge 0$.
As $\vec{e} > 0$, this is equivalent to $\vec{d} \geq 0$.
The proposition follows by applying this separately to the blocks of $\widetilde{X}^{\text{full}}$ or $\widetilde{Z}^{\text{full}}$.
\end{proof}

\subsection{Decomposition of a representation}
\label{subsec: decrep}
In this work, we compute the change of basis matrix $I$ following the approach described in Section 2.7 of~\cite{Serre}.
We are given a representation $\rho: G \to \operatorname{GL}(\mathbb{C}, n)$, a list $\{\sigma^{(i)}\}_i$ of irreducible representations of $G$, for $i=1, \ldots, N$.
Let $d_i$ be the dimension of the the $i$-th irreducible representation.
For each $i$, we define the family of maps $P_{\alpha \beta}^{(i)}$, where $\alpha, \beta = 1,\ldots, d_i$:
\begin{equation}
    P_{\alpha \beta}^{(i)} = \frac{d_i}{|G|} \sum_{g \in G} \left(\sigma_{g^{-1}}^{(i)}\right)_{\beta \alpha} \rho_g.
\end{equation}
From~\cite{Serre}, $P_{11}^{(i)}$ is a projector; using Gaussian elimination, we pick the first columns of $P_{11}^{(i)}$ that are linearly independent, and write them $\vec{h}^{(i)}_1, \ldots, \vec{h}^{(i)}_{m_i}$, with $m_i$ the multiplicity of $\sigma^{(i)}$ in $\rho$.
We then concatenate the following column vectors into a matrix $I^{(i),j}$:
\begin{equation}
    I^{(i),j} = \left(
    \vec{h}^{(i)}_j, P_{12} \vec{h}^{(i)}_j, \ldots, P_{1 d_i} \vec{h}^{(i)}_j
    \right)
\end{equation}
which is a map that injects the subspace corresponding to the $j$-th multiplicity space of the $i$-th irreducible representation into $\mathbb{C}^n$:
\begin{equation}
    I^{(i),j} \sigma_g^{(i)} = \rho_g I^{(i),j}.
\end{equation}
We concatenate horizontally again 
\begin{equation}
    I^{(i)} = \left( I^{(i),1)}, \ldots, I^{(i),m_i} \right)
\end{equation}
so that
\begin{equation}
    I^{(i)} \left(\mathbb{1}_{m_i} \otimes \sigma_g^{(i)}\right) = \rho_g I^{(i)}.
\end{equation}
The change of basis matrix $I$ is the final concatenation
\begin{equation}
    I = \left( I^{(1)}, \ldots, I^{(N)} \right)
\end{equation}
which has an inverse and obeys~\eqref{eq:rhogirrep}.
By convention, we write:
\begin{equation}
    I^{-1} = \begin{pmatrix}
     I^{-(1)} \\
     \vdots \\
     I^{-(N)}
    \end{pmatrix}, \qquad I^{-(i)} I^{(j)} = \begin{cases} i = j & \mathbb{1}_{d_i}, \\ i \ne j & \boldsymbol{0}. \end{cases}
\end{equation}

\section{Symmetry-adapted SDP relaxations}
\label{sec:symadaptedsdp}
We studied symmetries of optimization problems in the previous section.
When we restrict ourselves to transformations that preserve the ring structure as in Section~\ref{subsec:sym poly ring}, it means that any feasible solution of the optimization problem remains feasible under symmetry.
In Section~\ref{subsec:sym opt prob}, we further restricted those transformations to the ones leaving the objective invariant: this means that optimal solutions remain optimal under symmetry.
When considering explicit solutions of those problems, this means that any optimal solution corresponds to an orbit of solutions under the action of $G$.
However, we are studying semidefinite relaxations of the optimization problem, and those relaxations are convex.
This means that we can average over orbits under $G$, as we did for complex conjugation in Section~\ref{sec:realvscomplex}.
We will do so for both the moment-based and the sum-of-squares approaches.
As this part is pretty technical, we immediately follow the theory with its application on the CHSH example already seen before.
One can find the Mathetmatica notebook reproducing the example on Gitub \cite{Ioannou2021}.

\subsection{Symmetries of moment-based relaxations}
We consider the moment-based relaxations studied in Section~\ref{sec:momentbased}.
Let $\vec{y} = (\left \langle \psi \middle | \mathcal{M}_k \middle  | \psi \right \rangle)_k$ be a (feasible/optimal) solution of the SDP ~\eqref{eq:momentsdp}.
Then, given any transformation $g \in G$,
\begin{equation}
\vec{y}\phantom{'}' = \left( \left \langle \psi \middle | g(\mathcal{M}_k) \middle  | \psi \right \rangle) \right)_k
\end{equation}
is a (feasible/optimal) solution as well. Using Eqs.~\eqref{eq:defmoments}-\eqref{eq: Z}, we rewrite the SDP constraint:
\begin{equation}
\label{eq:Zprimemoments}
    Z' = \left\langle \psi \middle | g(\Xi) \middle | \psi \right \rangle = \sum_{k=0}^m y_k' A_k \succeq 0.
\end{equation}
We also have, using Eqs.~\eqref{eq:defXi} and~\eqref{eq:defrho}:
\begin{equation}
\label{eq:gXi}
    g(\Xi) = g(\vec{Q}) g(\vec{Q}^\dag)= \rho_g^{-1} \vec{Q} \vec{Q}^\dag \rho_g^{-\dag} = \rho_g^{-1} \Xi \rho_g^{-\dag}\;.
\end{equation}
and thus $Z' = \rho_g^{-1} Z \rho_g^{-\dag}$.

As the problem is convex, we take the orbit of solutions and define the average~\footnote{Note: we work with finite groups in this paper. In the case $G$ is a non-finite compact group, one simply defines $\Sigma_G$ as the integration over the Haar measure of $G$.} $\overline{y} \in \mathbb{R}^m$:
\begin{equation}
    \overline{y}_k =  \left( \left \langle \psi \middle | \overline{\mathcal{M}}_k \middle  | \psi \right \rangle \right)_k, \quad \overline{\mathcal{M}}_k := \frac{1}{|G|} \sum_{g\in G} g(\mathcal{M}_k),
\end{equation}
and by convexity, if $\vec{y}$ is a feasible/optimal solution, then $\overline{y}$ is a feasible/optimal solution as well.
We also average the $Z'$ and obtain, equivalently:
\begin{equation}
    \overline{Z} = \sum_{k=0}^m \overline{y}_k A_k = \frac{1}{|G|} \sum_{g\in G} \rho_g^{-1} Z \rho_g^{-\dag} = \frac{1}{|G|} \sum_{g\in G} \rho_g Z \rho_g^\dag,
\end{equation}
so that $\overline{Z} =  \rho_g \overline{Z} \rho_g^\dag$.
Here, as well as below, we send $g \to g^{-1}$ to simplify expressions being averaged.

We now make use of the following observations. First, we can restrict the variable $\vec{y}$ to the space spanned by $\overline{y}$. Second, the SDP matrix $\overline{Z}$ produced using $\overline{y}$ is invariant under $\rho$.

To reduce the number of variables in the SDP~\eqref{eq:momentsdp}, we observe that the $\overline{\mathcal{M}}_k$ are usually linearly dependent.
Nevertheless, they span a vector subspace of $\mathcal{P}$.
We write $\widetilde{\mathcal{M}}_\ell$ a basis of this subspace, indexed by $\ell = 0, \ldots, \tilde{m}$, still satisfying~\eqref{eq:condfirstone}-\eqref{eq:condselfadjoint}.
There are $\vec{w} \in \mathbb{R}^{m}$, $W: \mathbb{R}^{\tilde{m}} \to \mathbb{R}^{m}$ such that
\begin{equation}
\label{eq:basismoments}
    \begin{pmatrix}
    \overline{\mathcal{M}}_1 \\
    \ldots \\
    \overline{\mathcal{M}}_m
    \end{pmatrix}
    =
    \vec{w} + W
    \begin{pmatrix}
    \widetilde{\mathcal{M}}_1 \\
    \ldots \\
    \widetilde{\mathcal{M}}_{\tilde{m}}
    \end{pmatrix}\;,
\end{equation}
and we define the new SDP variables
\begin{equation}
    \widetilde{y}_\ell = \left \langle \psi \middle | \widetilde{\mathcal{M}}_\ell \middle | \psi \right \rangle, \qquad \overline{y} = \vec{w} + W \widetilde{y},
\end{equation}
for $\widetilde{y} \in \mathbb{R}^{\tilde{m}}$.
We define the average of $\Xi$ over the group:
\begin{equation}
\label{eq:symXi}
    \overline{\Xi} = \frac{1}{|G|}\sum_{g \in G} g(\Xi) \;,
\end{equation}
and obtain, using
\begin{equation}
        \overline{\Xi} = \sum_{k=0}^m A_k \frac{1}{|G|} \sum_{g\in G} g(\mathcal{M}_k) = \sum_{k=0}^m A_k \overline{\mathcal{M}}_k.
\end{equation}
Substituting~\eqref{eq:basismoments}, we obtain
\begin{equation}
\label{eq:overlineXiZwithredmoments}
    \overline{\Xi} = A_0' + \sum_{\ell=1}^{\tilde{m}} A'_\ell \widetilde{\mathcal{M}}_\ell, \quad \overline{Z} = A_0' + \sum_{\ell=1}^{\widetilde{m}} \widetilde{y}_\ell A_\ell' 
\end{equation}
with
\begin{equation}
    A'_0 = A_0 + \sum_{k=1}^m w_k A_k, \quad
    A'_\ell = \sum_k W_{k \ell} A_k \;.
\end{equation}
Considering~\eqref{eq:momentobj}
\begin{equation}
E = b_0 + \vec{b}^\top (\vec{w} + W \widetilde{y}) = \widetilde{b}_0 + \widetilde{b}^\top \widetilde{y}
\end{equation}
with $\widetilde{b}_0 = b_0 + \vec{b}^\top \vec{w}$ and $\widetilde{b} = \vec{b}^\top W$, we get
\begin{equation}
  \label{eq:momentsdpsympartial}
      d^\star = \max_{\widetilde{y} \in \mathbb{R}^{\widetilde{m}}} \;\; \widetilde{b}_0 + \widetilde{b}^\top \widetilde{y} \quad \text{s.t.} \quad \overline{Z} = A_0' + \sum_{\ell = 1}^{\widetilde{m}} \widetilde{y}_k A_k' \succeq 0.
\end{equation}

This semidefinite program can be simplified further.
The invariance from~\eqref{eq:gXi}
\begin{equation}
\label{eq:symXirho}
    \overline{\Xi} = \frac{1}{|G|}\sum_{g \in G} \rho_g \Xi \rho_g^\dag \quad \implies \quad \overline{\Xi} = \rho_g \overline{\Xi} \rho_g^\dag
\end{equation}
combined with the linear independence of the $\widetilde{\mathcal{M}}_\ell$ implies $A'_\ell = \rho_g A'_\ell \rho_g^\dag$ for all $\ell$ and $g \in G$.
By Proposition~\ref{prop:blockdiag}, we have
\begin{equation}
    I^{-1} A'_\ell I^{-\dag} = \bigoplus_{i=1}^N \widetilde{A}_\ell^{(i)} \otimes C_i^{-1}\;.
\end{equation}

Thus, we can replace the condition
\begin{equation}
\label{eq:Zoverline}
    \overline{Z} = \left \langle \psi \middle | \overline{\Xi} \middle | \psi \right \rangle \succeq 0
\end{equation} by $\widetilde{Z}^{\text{full}} = I^{-1} \overline{Z} I^{-\dag} \succeq 0$. By Proposition~\ref{prop: sdpblocs}, this is equivalent to
\begin{equation}
 \widetilde{Z}^{(i)} = \widetilde{A}_0^{(i)} + \sum_{\ell=1}^{\tilde{m}} \widetilde{y}_\ell \widetilde{A}_\ell^{(i)}  \succeq 0, \qquad \forall i.
\end{equation}

The final reduced semidefinite optimization problem is:
\begin{equation}
  \label{eq:momentsdpinvariant}
  \begin{split}
  d^\star  = \max_{\widetilde{y} \in \mathbb{R}^{\tilde{m}}}\;\; &\tilde{b}_0 + \sum_{\ell = 1}^{\tilde{m}} \tilde{b}_\ell \tilde{y}_\ell  \\
  \text{ s.t. } \;\; &\widetilde{Z}  =   \widetilde{A}_0 + \sum_{\ell = 1}^{\tilde{m}} \widetilde{A}_\ell \tilde{y}_\ell \succeq 0
  \end{split}
\end{equation}
where $\widetilde{A}_\ell = \bigoplus_i \widetilde{A}_\ell^{(i)}$ is a block-diagonal matrix.

\subsection{Example: CHSH}
Now, we want to illustrate the fact that symmetrizing the optimization problem enables to consider less optimization variables.

\subsubsection*{Finding a basis of invariant monomials}
We recall the symmetry group of CHSH generated by~\eqref{eq:CHSHgenerators}, which is a matrix group acting by the convention~\eqref{eq:affinegroupmatrix} on polynomials in the ring $\mathcal{P}$.
Its generators are explicitly
\begin{equation}
        a = \begin{pmatrix*}[r]
         1 & 0 & 0 & 0 &  0 \\
         1 & 0 & 0 & 0 & -1 \\
         0 & 0 & 0 & 1 &  0 \\
         0 & 1 & 0 & 0 &  0 \\
         0 & 0 & 1 & 0 &  0
         \end{pmatrix*},
         x = \begin{pmatrix*}[r]
           1 & 0 & 0 & 0 & 0 \\
           0 & 0 & 0 & 0 & 1 \\
           0 & 0 & 0 & 1 & 0 \\
           0 & 0 & 1 & 0 & 0 \\
           0 & 1 & 0 & 0 & 0
           \end{pmatrix*}.
\end{equation}
The list of elements of $G$ is
\begin{equation}
    G = \{ \mathbb{1}, a, a^2, \ldots, a^7, x, a x, a^2 x, \ldots, a^7 x \},
\end{equation}
and thus the average of an element $\mathcal{S} \in \mathcal{P}$ under $G$ is:
\begin{equation}
    \overline{\mathcal{S}} = \frac{1}{16} \sum_{j=0}^7 a (\mathcal{S} + x \mathcal{S}).
\end{equation}
The elements of the basis $\mathcal{M}$ average as 
\begin{equation}
    \begin{split}
        &\overline{\mathcal{A}_{0|0}} = \overline{\mathcal{A}_{0|1}} = \overline{\mathcal{B}_{0|0}} = \overline{\mathcal{B}_{0|1}} = \frac{1}{2},\\
        &\overline{\mathcal{A}_{0|0}\mathcal{A}_{0|1}+\mathcal{A}_{0|1}\mathcal{A}_{0|0}}  = \frac{1}{4},\\ &\overline{i(\mathcal{A}_{0|0}\mathcal{A}_{0|1}-\mathcal{A}_{0|1}\mathcal{A}_{0|0})} = 0,\\
        &\overline{\mathcal{B}_{0|0}\mathcal{B}_{0|1}+\mathcal{B}_{0|1}\mathcal{B}_{0|0}}  = \frac{1}{4}, \\ &\overline{i(\mathcal{B}_{0|0}\mathcal{B}_{0|1}-\mathcal{B}_{0|1}\mathcal{B}_{0|0})} = 0,\\
        &\overline{\mathcal{A}_{0|0}\mathcal{B}_{0|0}} = \overline{\mathcal{A}_{0|1}\mathcal{B}_{0|0}} =\overline{\mathcal{A}_{0|1}\mathcal{B}_{0|1}}= \frac{3}{8}  - \widetilde{\mathcal{M}}_1,\\
        &\overline{\mathcal{A}_{0|0}\mathcal{B}_{0|1}} = \frac{1}{8}  + \widetilde{\mathcal{M}}_1,\\
    \end{split}
\end{equation}
with $\widetilde{\mathcal{M}}_1 = \mathcal{A}_{0|1}+\mathcal{B}_{0|0}-\mathcal{A}_{0|0}\mathcal{B}_{0|0}+\mathcal{A}_{0|0}\mathcal{B}_{0|1}-\mathcal{A}_{0|1}\mathcal{B}_{0|0}-\mathcal{A}_{0|1}\mathcal{B}_{0|1}$.
Thus we set the symmetrized basis: 
\begin{equation}
\label{eq:chshsymmonomials}
    \widetilde{\mathcal{M}} = \{1, \widetilde{\mathcal{M}}_1 \}. 
\end{equation}
The set of linearly independent elements after averaging is of cardinality 2. 
The moment based relaxation yields $\widetilde{y}_0 = 1$ and $\widetilde{y}_1=\bra{\psi}\widetilde{\mathcal{M}}_1\ket{\psi}$. 
Hence, the moment matrix of the dual depends only on one optimization variable and reads, according to~\eqref{eq:overlineXiZwithredmoments}:
\begin{equation}
    \label{eq: CHSHZoverline}
    \overline{Z} = 
        \begin{pmatrix}
            1 & \frac{1}{2} & \frac{1}{2} & \frac{1}{2} & \frac{1}{2} \\
            \frac{1}{2} & \frac{1}{2} & \frac{1}{4} & \frac{3}{8}-\frac{\widetilde{y}_1}{4} & \frac{1}{8}+\frac{\widetilde{y}_1}{4} \\
            \frac{1}{2} & \frac{1}{4} & \frac{1}{2} & \frac{3}{8}-\frac{\widetilde{y}_1}{4} & \frac{3}{8}-\frac{\widetilde{y}_1}{4} \\
            \frac{1}{2} & \frac{3}{8}-\frac{\widetilde{y}_1}{4}  & \frac{3}{8}-\frac{\widetilde{y}_1}{4}  & \frac{1}{2} & \frac{1}{4} \\
            \frac{1}{2} & \frac{1}{8}+\frac{\widetilde{y}_1}{4} & \frac{3}{8}-\frac{\widetilde{y}_1}{4} & \frac{1}{4} & \frac{1}{2}
        \end{pmatrix}. 
        %Equation checked with CHSH_level=1.n
\end{equation}

The partially reduced SDP~\eqref{eq:momentsdpsympartial} is given by
\begin{equation}
    \begin{split}
         d^\star=\max_{\widetilde{y}_1}\;\; & \tilde{b}_0 + \tilde{b}_1 \widetilde{y}_1\\
        \text{s.t.} \;\; &\overline{Z}=A'_0+A'_1\widetilde{y}_1\succeq0.
    \end{split}    
    \label{eq:dualCHSHsym}
\end{equation}
where we have $\tilde{b}_0 = 2$ and $\tilde{b}_1=-4$ and it is straightforward to extract $A_0'$ and $A_1'$ from \eqref{eq: CHSHZoverline}.
This problem can be solved analytically and the optimum $d^\star=2\sqrt{2}$ is reached for ${\widetilde{y}_1}^\star = \frac{1}{2}-\frac{1}{\sqrt{2}}$.

\subsubsection*{Block-diagonalization}
\begin{table*}[]
    \centering
    \begin{tabular}{|c|c|r||r|r|r|r|r|r|r|}
    \hline
         Irrep. & Kernel & Dim. & $\mathbb{1}$ & $a^4$ & $\{a^2,a^6\}$ & $\{a,a^7\}$ & $\{a^3,a^5\}$ & $\{x,a^2 x,a^4 x, a^6 x\}$ & $\{ a x, a^3 x, a^5 x, a^7 x\}$\\
         \hline
         $\sigma^{(1)}$ & $G$ & $1$ & $1$ & $1$ & $1$ & $1$ & $1$ & $1$ & $1$ \\
         \hline
         $\sigma^{(2)}$ & $\langle a \rangle$ & $1$ & $1$ & $1$ & $1$ & $1$ & $1$ & $-1$ & $-1$ \\
         \hline
         $\sigma^{(3)}$ & $\langle a^2, x \rangle$ & $1$ & $1$ & $1$ & $1$ & $-1$ & $-1$ & $1$ & $-1$ \\
         \hline
         $\sigma^{(4)}$ & $\langle a^2, a x \rangle$ & $1$ & $1$ & $1$ & $1$ & $-1$ & $-1$ & $-1$ & $1$ \\
         \hline
         $\sigma^{(5)}$ & $\langle \rangle$ & $2$ & $2$ & $-2$ & $0$ & $\sqrt{2}$ & $-\sqrt{2}$ & $0$ & $0$ \\
         \hline
         $\sigma^{(6)}$ & $\langle a^4 \rangle$ & $2$ & $2$ & $2$ & $-2$ & $0$ & $0$ & $0$ & $0$ \\
         \hline
         $\sigma^{(7)}$ & $\langle \rangle$ & $2$ & $2$ & $-2$ & $0$ & $-\sqrt{2}$ & $\sqrt{2}$ & $0$ & $0$ \\
         \hline
    \end{tabular}
    \caption{Character table of the dihedral group $G^{\text{CHSH}}$.
    For each irrep, $\operatorname{ker} \sigma^{(i)} = \{ g \in G : \sigma^{(i)}_g = \mathbb{1} \}$ is the kernel, i.e. the subgroup that acts through $\sigma^{(i)}$ as the identity.
    If the kernel is the trivial group $\langle \rangle$, the representation is faithful.
    The last columns list the seven conjugacy classes of $G^{\text{CHSH}}$, and below the value of the character for this conjugacy class, i.e. the trace of the image for each irrep.}
    \label{tab:CharacterTableCHSH}
\end{table*}
We will use Proposition.~\ref{prop:blockdiag} to blockdiagonalize $\overline{Z}$.  
For that, we need the action of this group on the generating sequence.
In our case, the generating sequence~\eqref{eq: CHSHQ} is
\begin{equation*}
    \vec{\mathcal{Q}} = (1, \mathcal{A}_{0|0},\mathcal{A}_{0|1},\mathcal{B}_{0|0},\mathcal{B}_{0|1})^\top
\end{equation*}
which is exactly the vector $(1, \mathcal{X}_1, \ldots, \mathcal{X}_4)^\top$ used in~\eqref{eq:affinegroupmatrix}.
Thus, we have simply
\begin{equation}
    \rho: G \to \operatorname{GL}(\mathbb{C}, 5), \qquad \rho_g = g\;.
\end{equation}
To decompose this representation, we consider the irreducible representations (irreps) of $G$, which are summarized in Table~\ref{tab:CharacterTableCHSH}.

The group has $7$ irreps, labelled $\sigma^{(1)}$ to $\sigma^{(7)}$; the first four are one-dimensional, and thus the images of $a$, $x$ can be read directly from the character table:
\begin{align}
\label{eq:chshonedimirreps}
    \sigma^{(1)}_a &= 1, &
    \sigma^{(2)}_a &= 1, &
    \sigma^{(3)}_a &= -1, & 
    \sigma^{(4)}_a &= -1, \nonumber \\
    \sigma^{(1)}_x &= 1,  &
     \sigma^{(2)}_x &= -1, &
     \sigma^{(3)}_x &= 1,  &
     \sigma^{(4)}_x &= -1, 
\end{align}
while, for the last three, we have, with $k=1,2,3$:
\begin{equation}
\label{eq:chshtwodimirreps}
    \sigma^{(k+4)}_a = \begin{pmatrix}
     \operatorname{cos} \frac{2 \pi k}{8} & -\operatorname{sin} \frac{2 \pi k}{8} \\
     \operatorname{sin} \frac{2\pi k}{8} & \operatorname{cos} \frac{2 \pi k}{8}
    \end{pmatrix},
    \quad
    \sigma^{(k+4)}_x = \begin{pmatrix}
     1 & 0 \\ 0 & -1
    \end{pmatrix}.
\end{equation}

Our representation $\rho$ has a decomposition into irreducible representations~\eqref{eq:rhogirrep}:
\begin{equation}
    I^{-1} \rho_g I = \sigma^{(1)}_g \oplus \sigma^{(5)}_g \oplus \sigma^{(7)}_g
\end{equation}
where only three irreducible representations appear with multiplicity one. 

\begin{table}[]
\begin{tabular}{|l|l|l|l|l|l|l|l|}
\hline
$i$ & 1 & 2 & 3 & 4 & 5 & 6 & 7  \\ \hline \hline
$d_i$ & 1 & 1 & 1 & 1 & 2 & 2 & 2  \\ \hline
$m_i$ & 1 & 0 & 0 & 0 & 1 & 0 & 1  \\ \hline
rank($\widetilde{Z}^{(i)}$) & 1 & - & - & - & 1 & - & 0  \\ \hline
rank($\widetilde{X}^{(i)}$) & 0 & - & 0 & - & 0 & 0 & 1  \\ \hline
\end{tabular}
\caption{Numbers of interest for CGLMP $d=2$. In the first row we give the label of the irreducible representations, in the second row we give the respective dimensions of the irreducible representations while in the third row we have their multiplicities.}
\label{tab: info_d=2}
\end{table}

The change of base matrix is the horizontal concatenation $I = \left( I^{(1)} I^{(5)} I^{(7)} \right)$ with $I^{(1)} = (1, 1/2, 1/2, 1/2, 1/2)^\top$
\begin{equation}
\label{eq:CHSHI}
    I^{(5)} = \frac{1}{4} \begin{pmatrix*}
    0 & 0 \\
    -1 & c_+ \\
    -c_+ & -1 \\
    -c_+ & 1\\
    -1 & -c_+
    \end{pmatrix*}, \quad
    I^{(7)} = \frac{1}{4}
    \begin{pmatrix*}
    0 & 0 \\
    -1 & c_- \\
    c_- & 1 \\
    c_- & -1 \\
    -1 & -c_-
    \end{pmatrix*}
\end{equation}
where $c_\pm=\sqrt{2}\pm1$.
Later on, we will need
\begin{equation}
  I^{-(7)}=\begin{pmatrix} 1 & -1-s & s & s &-1-s \\ 0 & s & 1+s & -1-s& -s \end{pmatrix},
\end{equation}
with $s=1/\sqrt{2}$.

After the change of basis, the irreducible representations $\sigma^{(1)}$, $\sigma^{(5)}$ and $\sigma^{(7)}$ are unitary, and therefore  $C_1 = 1$, $C_5=C_7 =\mathbb{1}_2$.
The matrix $\widetilde{Z}^{\text{full}} = I^{-1} \overline{Z}I^{-\dagger}$ is blockdiagonal and has three blocks
\begin{equation}
\label{eq: CHSHZfull}
    \widetilde{Z}^{\text{full}} = 
    \widetilde{A}_0^{(1)} \oplus [(\widetilde{A}_0^{(5)}+\widetilde{A}_1^{(5)}\widetilde{y}_1)\otimes \mathbb{1}_2] \oplus [(\widetilde{A}_0^{(7)}+\widetilde{A}_1^{(7)}\widetilde{y}_1)\otimes\mathbb{1}_2]
\end{equation}
where 
\begin{align}
\label{eq:matAsymCHSH}
\widetilde{A}_0^{(1)} & = 1, &\widetilde{A}_0^{(5)} &= 1/2, &\widetilde{A}_0^{(7)} & = 1/2, \nonumber\\
\widetilde{A}_1^{(1)} & = 0, &\widetilde{A}_1^{(5)} &= -c_-, &\widetilde{A}_1^{(7)} &= c_+.
\end{align}

Hence, the semidefinite positive constraint of Eq.~\eqref{eq:dualCHSHsym} is replaced by three semidefinite positive constraints, i.e. one on each block.
The dual SDP~\eqref{eq:dualCHSHsym} after the blockdiagonalization and using Proposition~\ref{prop: sdpblocs} is given by 
\begin{subequations}
\label{eq: CHSHdual}
    \begin{align}
        \max_{\widetilde{y}_1} \;\; &2-4\widetilde{y}_1\\
        \text{s.t. }\;\; &\widetilde{Z}^{(1)}= 1 \geq 0 \label{eq: CHSHdualCons0}\\
        &\widetilde{Z}^{(5)}= 1/2 + c_- \tilde{y}_1 \geq 0 \label{eq: CHSHdualCons1}\\
        & \widetilde{Z}^{(7)}= 1/2 + c_+ \tilde{y}_1 \geq 0 \label{eq: CHSHdualCons2}.
    \end{align}
\end{subequations}

It is straightforward to solve this LP. 
The objective function of the LP is maximized if the second constraint \eqref{eq: CHSHdualCons1} is saturated, which leads to $\widetilde{y}_1^\star=\frac{1}{2}-\frac{1}{\sqrt{2}}$. 
We display the rank of the moment matrix blocks in Table~\ref{tab: info_d=2}.

\subsection{Symmetries of sum-of-squares certificates}
We now focus on the sum-of-squares certificates studied in Section~\ref{sec:sumofsquares}.
Note that a sum-of-squares certificate stays valid under the action of $G$ because of the group definition~\eqref{eq:defsymgroup}. Revisiting~\eqref{eq:Xsdp} with~\eqref{eq:gXi}, for all $g \in G$, we have
\begin{equation}
\label{eq:symmetrizingformalsos}
    \mathcal{F}  =  g(\mathcal{F}) 
     =  \operatorname{tr}[g(\Xi) X] = \operatorname{tr}[\rho_g^{-1} \Xi \rho_g^{-\dag} X]
     =  \operatorname{tr}[\Xi\rho_g^{-\dag} X \rho_g^{-1}].
\end{equation}

Thus any feasible (respectively optimal) $X$ can be replaced by
\begin{equation}
\overline{X} = \frac{1}{|G|} \sum_{g \in G} \rho_g^\dag X \rho_g.
\end{equation}
which will be feasible (respectively optimal) as well.
We have
\begin{equation}
    \overline{X} = \rho_g^\dag \overline{X} \rho_g
\end{equation}
and thus by Proposition~\ref{prop:blockdiag}
\begin{equation}
\label{eq:blkdiagX}
    \overline{X} = I^{-\dag} \left [ \bigoplus_{i=1}^N \frac{1}{d_i} \widetilde{X}^{(i)} \otimes C_i \right ] I^{-1}.
\end{equation}
Note the normalization factor $\frac{1}{d_i}$ added for later convenience.
We could totally insert this variable $\overline{X}$ in the semidefinite program~\eqref{eq:sossdp}; however, we can reduce the number of equality constraints as well.

Looking at the last equality of~\eqref{eq:symmetrizingformalsos}, we may symmetrize $\Xi$ as well without loss of generality, and write
$\mathcal{F} = \operatorname{tr}[\overline{\Xi} \overline{X}]$ using~\eqref{eq:symXi}. With Proposition~\ref{prop:blockdiag} again:
\begin{equation}
\widetilde{\Xi} := I^{-1} \overline{\Xi} I^{-\dag} = \sum_{\ell=0}^{\tilde{m}} \bigoplus_{i=1}^N \left( \widetilde{A}^{(i)}_\ell \otimes C_i^{-1} \right) \widetilde{\mathcal{M}}_\ell,
\end{equation}
we write, defining a new variable $\widetilde{X} = \bigoplus_{i=1}^{N} \widetilde{X}^{(i)}$:
\begin{eqnarray}
\label{eq:blockdiagprimal}
\mathcal{F} &=&\operatorname{tr}[\overline{\Xi} \overline{X}] = \operatorname{tr}[I^\dag \widetilde{X} I I^{-1} \widetilde{\Xi} I^{-\dag}] = \operatorname{tr}[\widetilde{X} \widetilde{\Xi}] \nonumber \\
&=& \sum_{\ell=0}^{\tilde{m}} \operatorname{tr}\left[\bigoplus_{i=1}^N \left( \frac{1}{d_i} (\widetilde{A}^{(i)}_\ell \widetilde{X}^{(i)}) \otimes (C_i^{-1} C_i) \right) \right ] \widetilde{\mathcal{M}}_\ell \nonumber \\
&=& \sum_{\ell=0}^{\tilde{m}} \operatorname{tr}\left[ \widetilde{A}_\ell \widetilde{X} \right ] \widetilde{\mathcal{M}}_\ell \nonumber \\
&=& \mu - \mathcal{E} = \mu - \tilde{b}_0 - \sum_{\ell=1}^{\tilde{m}} \tilde{b}_\ell \widetilde{\mathcal{M}}_\ell,
\end{eqnarray}

Equating the last two lines of Eq.~\eqref{eq:blockdiagprimal}, we get the semidefinite program:
\begin{equation}
    \begin{split}
    \label{eq:sossdpinvariant}
    p^\star = \min_{\widetilde{X}} \;\; &\tilde{b}_0 + \operatorname{tr} [\widetilde{A}_0 \widetilde{X}]  \\
    \text{s.t.} \;\; &\operatorname{tr} [\widetilde{A}_\ell \widetilde{X}]  =  - \tilde{b}_\ell,\forall \ell\\
     &\widetilde{X}  \succeq  0,
    \end{split}
\end{equation}
which is dual to~\eqref{eq:momentsdpinvariant}.

\subsection{Example: CHSH}
In the CHSH example, the semidefinite variable $X$ has dimension $5 \times 5$, and each optimal solution has a corresponding orbit of optimal solutions $\rho_g^\dag X \rho_g$.
From~\eqref{eq:blkdiagX} with the change of basis matrix~\eqref{eq:CHSHI} for $\rho_g$, we get
\begin{equation}
    I^\dag \overline{X} I = \widetilde{X}^{(1)} \oplus \frac{\widetilde{X}^{(5)} \otimes \mathbb{1}_2}{5} \oplus \frac{\widetilde{X}^{(7)} \otimes \mathbb{1}_2}{7}\;,
\end{equation}
with $\widetilde{X}^{(1)},\widetilde{X}^{(5)},\widetilde{X}^{(7)}\in\mathbb{R}$.
The $X^\star$ from Eq.~\eqref{eq:primalsolutionchsh} is invariant under $G$, and corresponds to the block-diagonal decomposition
\begin{equation}
\label{eq:solutionchshsymprimal}
    \widetilde{X}^{(1)} = \widetilde{X}^{(5)} = 0, \qquad \widetilde{X}^{(7)} = 4 c_- .
\end{equation}

If we do not know the solution in advance, we prefer to solve the semidefinite program~\eqref{eq:sossdpinvariant} where the constant blocks $\mathbb{1}_2$ are eliminated and the constraints simplified.
For that purpose, we reuse the basis~\eqref{eq:chshsymmonomials} leading to the matrices~\eqref{eq:matAsymCHSH}:
\begin{equation}
\label{eq: CHSHprimalblkdiag}
    \begin{split}
         \min_{\widetilde{X}^{(1)}, \widetilde{X}^{(5)}, \widetilde{X}^{(7)}   \in  \mathbb{R} }\;\; &2 + \widetilde{X}^{(1)} + \frac{1}{2} \widetilde{X}^{(5)} + \frac{1}{2} \widetilde{X}^{(7)} \\
            \text{s.t.}\;\; &c_- \widetilde{X}^{(5)} + c_+ \widetilde{X}^{(7)}  = 4\\
            &\widetilde{X}^{(1)}, \widetilde{X}^{(5)}, \widetilde{X}^{(7)}  \ge  0, 
    \end{split}
\end{equation}
for which we get the optimal solution~\eqref{eq:solutionchshsymprimal} again.

To conclude this example, we discuss a way to simplify the construction of certificates.
Let us assume that we can solve the moment-based problem easily, for example by computing the moments of an explicit quantum realization believed to be optimal.
In that case, one can use complementarity~\eqref{eq: complementarity} to reduce the number of variables in the sum-of-squares relaxation.
In our block-diagonal form, it states
\begin{equation}
\label{eq: complementarityblockdiag}
    \widetilde{X}^{(i)} \widetilde{Z}^{(i)} = 0, \qquad \forall i\;.
\end{equation}
Then, one solves the simplified problem and verifies at the end if strong duality holds.
For our problem, the moment-based relaxation~\eqref{eq: CHSHdual} has only the third constraint saturated, see Table~\ref{tab: info_d=2}.
We can thus set $\widetilde{X}^{(1)} = \widetilde{X}^{(5)} = 0$, and obtain the problem
\begin{equation}
    \label{eq: CHSHprimalsym}
    \begin{split}
    \min_{\widetilde{X}^{(7)} \in \mathbb{R}}\;\;\; & 2 + \frac{1}{2} \widetilde{X}^{(7)} \\
     \text{s.t.}\;\; &c_+ \widetilde{X}^{(7)}  = 4\\
      &\widetilde{X}^{(7)} \ge 0,
    \end{split}
\end{equation}
which can be solved directly to obtain again~\eqref{eq:solutionchshsymprimal}.

\subsection{Recovering SOS certificates}

To recover a certificate from the solution of~\eqref{eq:sossdpinvariant}, we compute the LDL decomposition of both the solution and the $C_i$ blocks from Proposition~\ref{prop:blockdiag}:
\begin{equation}
    \widetilde{X}^{(i)} = \widetilde{L}_i \widetilde{D}_i \widetilde{L}_i^\dag, \quad C_i = M_i E_i M_i^\dag\;.
\end{equation}
Note that the computational effort of computing this decomposition is much reduced (Cholesky runs in $\mathcal{O}(n^3)$ if the matrix is $n \times n$).
Now, one has:
\begin{equation}
    \label{eq: SOSfromXfull}
    \mathcal{F} = \vec{\mathcal{Q}}^\dag \overline{X} \vec{\mathcal{Q}} = \vec{\mathcal{Q}}^\dag I^{-\dag} \widetilde{X}^{\text{full}} I^{-1} \vec{\mathcal{Q}}
\end{equation}
and
\begin{equation}
    \widetilde{X}^{\text{full}} = \bigoplus_{i=1}^N \frac{1}{d_i} (\widetilde{L}_i \otimes M_i) (\widetilde{D}_i \otimes E_i) (\widetilde{L}_i \otimes M_i)^\dag
\end{equation}
where the certificate components are given by:
\begin{equation}
    \vec{\mathcal{T}} = \left[ \bigoplus_{i=1}^N \left (\sqrt{\frac{1}{d_i} \widetilde{D}_i} \otimes \sqrt{E_i} \right) \left (\widetilde{L}_i \otimes M_i \right )^\dag \right ] I^{-1} \vec{\mathcal{Q}}.
  \end{equation}

\subsection{Example: CHSH}
Let us recover the SOS from the solution of \eqref{eq: CHSHprimalblkdiag}. 
The only non-zero block is $\widetilde{X}^7 = \widetilde{L}_7 \widetilde{D}_7 \widetilde{L}_7^\dag$ where we set $\widetilde{L}_7 = \sqrt{2}$ and $\widetilde{D}_7 = 2 c_-$.
The factor $C_7 = \mathbb{1}_2$ is the identity and thus we take $M_7 = E_7 = \mathbb{1}_2$ as well.

Using Eq.~\eqref{eq: SOSfromXfull} we get
\begin{equation}
    \begin{split}
    2\sqrt{2}-\mathcal{E}_2 = c_- \Big[ &
    \left \|
    -c_+ \mathcal{A}_{0|0}
    + \mathcal{A}_{0|1}
    + \mathcal{B}_{0|0}
    -c_+ \mathcal{B}_{0|1} + \sqrt{2}
    \right \|^2\\
    + &\left \|
    \mathcal{A}_{0|0}
    +c_+ \mathcal{A}_{0|1}
    -c_+ \mathcal{B}_{0|0}
    - \mathcal{B}_{0|1}
    \right \|^2\Big],
   \end{split}
   \label{eq: CHSHSOSblockdiag}
\end{equation}
where we remind that $c_{\pm} =\sqrt{2}\pm1$, and define $\| \mathcal{S} \|^2 = S^\dag S$.

\section{CGLMP: Generalities}
\label{sec:cglmpgeneralities}
We consider the standard scenario of the CGLMP inequality which consists of two spacelike separated parties with two inputs and $d$ outputs~\cite{Collins_2002}. 
The CGLMP inequality is given by 
\begin{equation}
    \mathcal{E}_d = \sum_{k=0}^{\lfloor{d/2}\rfloor-1}\left(1-\frac{2k}{d-1}\right)(P_k-Q_k),
\end{equation}
where 
\begin{align*}
P_k & =\sum_{i=0}^{m-1} (P(\mathcal{A}_i=\mathcal{B}_i+k)+P(\mathcal{B}_i=\mathcal{A}_{i+1}+k)), \nonumber \\
Q_k & =\sum_{i=0}^{m-1} (P(\mathcal{A}_i=\mathcal{B}_i-k-1)+P(\mathcal{B}_i=\mathcal{A}_{i+1}-k-1)),    
\end{align*}
with
\begin{equation*}
    P(\mathcal{A}_a=\mathcal{B}_b+k) = \sum_{j=0}^{d-1}P(\mathcal{A}_a = j+k \mod d,\mathcal{B}_b=j)\;.
\end{equation*}

\subsubsection*{Polynomial ring}
For the CGLMP scenario with $d$ outputs, the free polynomial ring is given by 
\begin{equation}
\label{eq:PCGLMP}
\begin{split}
    \mathcal{P}^{\text{CGLMP,d}}_{\text{free}} = \mathbb{C}[ \mathcal{A}_{0|0},...,\mathcal{A}_{d-1|0},\mathcal{A}_{0|1},...,\mathcal{A}_{d-1|1},\\\mathcal{B}_{0|0},...,\mathcal{B}_{d-1|0},\mathcal{B}_{0|1},...,\mathcal{B}_{d-1|1}].
    \end{split}
\end{equation}
The quotient ring $\mathcal{P}^{\text{CGLMP},d}$ is defined using the ideal $\mathcal{I}^{\text{Bell}}$~\eqref{eq:ideal}.

\subsubsection*{Symmetries}
\begin{figure}
    \centering
    \includegraphics{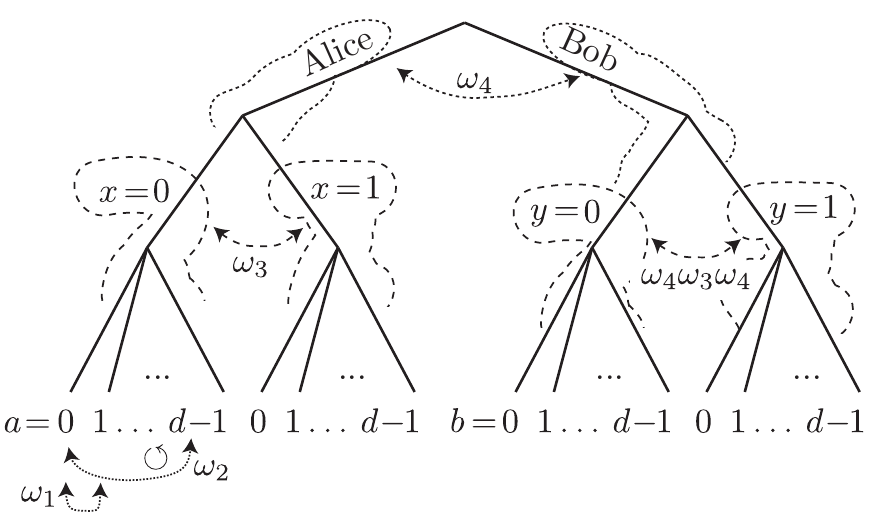}
    \caption{The tree represents the CGLMP scenario with two parties, two measurement settings per party, and $d$ measurement outcomes per measurement. The symmetry group is the automorphism group of the tree, which is generated by $\omega_1$, $\omega_2$ (permutation of outcomes), $\omega_3$ (permutation of Alice's settings), and $\omega_4$ (permutation of parties). For example, the permutation of Bob's settings is given by $\omega_4 \omega_3 \omega_4$.}
    \label{fig:tree}
\end{figure}
The automorphism group of this polynomial ring corresponds has a visual description given in Figure~\ref{fig:tree}.
This group is written using the notation of~\eqref{eq:affinegroupmatrix} and is generated, for $d \ge 3$, by the matrices
\begin{equation}
    \begin{split}
    \label{eq:cglmp:automorphismgenerators}
        \omega_1 &= 1 \oplus \sigma_1 \oplus \mathbb{1}_{4d-6},\\
        \omega_2 & = M_{d,\text{cyclic}} \oplus \mathbb{1}_{3(d-1)}\\
        \omega_3 &= 1 \oplus (\sigma_1 \otimes \mathbb{1}_{d-1}) \oplus \mathbb{1}_{2(d-1)},\\
        \omega_4 &= 1 \oplus (\sigma_1 \otimes \mathbb{1}_{2(d-1)}),
    \end{split}
\end{equation}
where
\begin{equation}
    M_{d,\text{cyclic}} =\left( 
        \begin{array}{rrrrr}
        1 & 0 & \cdots & 0 & 0 \\
        0 & 0 &  &  &  \\
        \vdots & \vdots &  & \mathbb{1}_{d-2} &  \\
        0 & 0 &  &  &  \\
        1 & -1 & \cdots & -1 & -1
        \end{array} \right).
\end{equation}
For $d=2$, we use the same $\omega_3$ and $\omega_4$ and set
\begin{equation}
    \omega_1 = \omega_2 = \begin{pmatrix*}[r]
    1 & 0 \\
    1 & -1
    \end{pmatrix*} \oplus \mathbb{1}_3.
\end{equation}

The matrix $\omega_4$ permutes the parties.
The matrix $\omega_3$ permutes the measurement settings of Alice; by conjugating this element as in $\omega_4 \omega_3 \omega_4$ one permutes the measurement settings of Bob instead.
Finally, the matrices $\omega_1$ and $\omega_2$ permute the outcomes of Alice's first measurement setting.
In particular, $\omega_1$ swaps the first two outcomes, while $\omega_2$ is a cyclic permutation of all outcomes. The two matrices $\omega_1$ and $\omega_2$ generate the symmetric of degree $d$ and order $d!$.
The four matrices generate the group $\Omega$ of order $|\Omega| = 8 (d!)^4$.

We now consider the group leaving the inequality $\mathcal{E}_d$ invariant.
For $d=2,3,4,5$, we verified that this group corresponds to the dihedral group of order $8d$, with presentation
\begin{equation}
    \label{eq:Gcglmp}
   G = \left \langle a, x \middle | a^{4d} = x^2 = x a x a = \mathbb{1} \right \rangle\;.
\end{equation}
For $d=2,3,4,5$, we use
\begin{equation}
    a = \omega_2 \omega_3 \omega_4\;,
\end{equation}
while the form of $x$ depends on the dimension (among the possible generators, we picked the ones with the shortest factorization in $\omega$).

\begin{itemize}
    \item For $d=2$, we had $x_2 = \omega_3 \omega_4 \omega_3$ in~\eqref{eq:CHSHgenerators}.
    
% w1 w3 w1 w4 w1 w3 w1
    \item For $d=3$, we write $x_3 = \omega_1 \omega_3 \omega_1 \omega_4 \omega_1 \omega_3 \omega_1$.
    
% w3 w1 w2 w1 w2^-1 w1 w2 w3^-1 w1 w2 w1 w2^-1 w1^-1 w2 w3 w4^-1 w1 w2 w1 w2^-1 w1^-1 w2 w3^-1 w2^-1 w1 w3 w4^2 w3 w2 w1 w2^-1 w1 w2 w3 w4
    \item For $d=4$, we write $x_4 = \omega_3$ $\omega_1$ $\omega_2$ $\omega_1$ $\omega_2^{-1}$ $\omega_1$ $\omega_2$ $\omega_3^{-1}$ $\omega_1$ $\omega_2$ $\omega_1$ $\omega_2^{-1}$ $\omega_1^{-1}$ $\omega_2$ $\omega_3$ $\omega_4^{-1}$ $\omega_1$ $\omega_2$ $\omega_1$ $\omega_2^{-1}$ $\omega_1^{-1}$ $\omega_2$ $\omega_3^{-1}$ $\omega_2^{-1}$ $\omega_1$ $\omega_3$ $\omega_4^{2}$ $\omega_4$ $\omega_3$ $\omega_2$ $\omega_1$ $\omega_2^{-1}$ $\omega_1$ $\omega_2$ $\omega_3$ $\omega_4$.
    
% w3 w1 w2 w1 w2^-1 w1 w2^3 w1 w2^-2 w3^-1 w1 w2 w1 w2^-1 w1^-1 w2^3 w1 w2^-2 w3 w4^-1 w1 w2 w1 w2^-1 w1^-1 w2^3 w1 w2^-2 w3^-1 w2^-1 w1^-1 w2 w1 w2^-1 w1 w2 w3 w4^2 w3 w2 w1 w3 w4^2 w3 w1 w2 w1^-1 w2^-2 w3^-1 w4^-1
    \item For $d=5$, we write $x_5 = \omega_3$ $\omega_1$ $\omega_2$ $\omega_1$ $\omega_2^{-1}$ $\omega_1$ $\omega_2^{3}$ $\omega_2^{2}$ $\omega_2$ $\omega_1$ $\omega_2^{-2}$ $\omega_2^{-1}$ $\omega_3^{-1}$ $\omega_1$ $\omega_2$ $\omega_1$ $\omega_2^{-1}$ $\omega_1^{-1}$ $\omega_2^{3}$ $\omega_2^{2}$ $\omega_2$ $\omega_1$ $\omega_2^{-2}$ $\omega_2^{-1}$ $\omega_3$ $\omega_4^{-1}$ $\omega_1$ $\omega_2$ $\omega_1$ $\omega_2^{-1}$ $\omega_1^{-1}$ $\omega_2^{3}$ $\omega_2^{2}$ $\omega_2$ $\omega_1$ $\omega_2^{-2}$ $\omega_2^{-1}$ $\omega_3^{-1}$ $\omega_2^{-1}$ $\omega_1^{-1}$ $\omega_2$ $\omega_1$ $\omega_2^{-1}$ $\omega_1$ $\omega_2$ $\omega_3$ $\omega_4^{2}$ $\omega_4$ $\omega_3$ $\omega_2$ $\omega_1$ $\omega_3$ $\omega_4^{2}$ $\omega_4$ $\omega_3$ $\omega_1$ $\omega_2$ $\omega_1^{-1}$ $\omega_2^{-2}$ $\omega_2^{-1}$ $\omega_3^{-1}$ $\omega_4^{-1}$.
\end{itemize}

We did not go beyond $d=5$ for reasons that will be discussed later.

\subsubsection*{Irreducible representations}
The dihedral group of order $8d$ has $2d + 3$ irreducible representations.
The first four were already given in~\eqref{eq:chshonedimirreps}:
\begin{align}
    \label{eq: 1d_irrep}
    \sigma^{(1)}_a &= 1, &
    \sigma^{(2)}_a &= 1, &
    \sigma^{(3)}_a &= -1, & 
    \sigma^{(4)}_a &= -1, \nonumber \\
    \sigma^{(1)}_x &= 1,  &
     \sigma^{(2)}_x &= -1, &
     \sigma^{(3)}_x &= 1,  &
     \sigma^{(4)}_x &= -1, 
\end{align}
and the remaining $2d-1$ are indexed by $k=1,\ldots,2d-1$:
\begin{equation}
    \label{eq: 2d_irrep}
    \sigma^{(k+4)}_a = \begin{pmatrix}
     \operatorname{cos} \frac{2 \pi k}{4d} & -\operatorname{sin} \frac{2 \pi k}{4d} \\
     \operatorname{sin} \frac{2\pi k}{4d} & \operatorname{cos} \frac{2 \pi k}{4d}
    \end{pmatrix},
    \quad
    \sigma^{(k+4)}_x = \begin{pmatrix}
     1 & 0 \\ 0 & -1
    \end{pmatrix}.
\end{equation}

\subsubsection*{Conjectured maximal quantum violations}

In Ref. \cite{Collins_2002} the violation of the inequality for the maximally entangled state of two  dimensional systems is studied. 
There exist a state that better violates the inequality than the maximally entangled state~\cite{Acin_2002,Acin_2005}. 
The measurements maximally violating the CGLMP inequality are conjectured in Ref.~\cite{Collins_2002} to be given by $\mathcal{A}_{a|x}=\ketbra{a}_{a|x}$ and $\mathcal{B}_{b|y} = \ketbra{b}_{b|y}$ with
\begin{align}
\label{eq:cglmpconjecturedmeasurements}
        \ket{a}_{a|x} &= \frac{1}{\sqrt{d}}\sum_{k=0}^{d-1} \exp(\mathbf{i}\frac{2\pi}{d}k(a+\alpha_x))\ket{k}, \nonumber \\
        \ket{b}_{b|y} &= \frac{1}{\sqrt{d}}\sum_{k=0}^{d-1} \exp(\mathbf{i}\frac{2\pi}{d}k(-b+\beta_y))\ket{k},
\end{align}
where the phases are $\alpha_1=1/2$, $\alpha_2=0$, $\beta_1=-1/4$ and $\beta_2=1/4$.
In Ref.~\cite{Zohren2008}, a numerical maximization of the violation of the inequality over all states and measurements shows that the optimal state is not the maximally entangled, but that the best measurements are the same as conjectured best measurements for the maximally entangled states. 
The state maximally violating the inequality is the eigenvector of the Bell operator corresponding to its maximal eigenvalue~\cite{Zohren2008}.

For $d=2$, the conjecture has already been proven. The state maximally violating the inequality is the maximally entangled two-qubit Bell state
\begin{equation}
    \label{eq:cglmpoptimalstate2}
    \ket{\Phi^+}=\frac{1}{\sqrt{2}}(\ket{00}+\ket{11})
\end{equation}
 and the maximal quantum violation is the well-known $\mu_2=2\sqrt{2}$. As already mentioned, we will consider this case only for pedagogical reasons.

For $d=3$, we verify the following conjecture. The state violating the inequality maximally is
\begin{equation}
\label{eq:cglmpoptimalstate3}
     \ket{\psi} = \frac{1}{\sqrt{2+\gamma_3^2}} (\ket{00}+\gamma_3\ket{11} + \ket{22})
 \end{equation}
where $\gamma_3 = \frac{\sqrt{11}-\sqrt{3}}{2}$ and the maximal quantum violation is $\mu_3 = (10-2\gamma_3^2)/3 = 1 + \sqrt{11/3} \approx 2.91485$.

For $d=4$, we verify the following conjecture. The state violating the inequality maximally is
\begin{equation}
\label{eq:cglmpoptimalstate4}
    \ket{\psi} = \frac{1}{\sqrt{2+2\gamma_4^2}}(\ket{00}+\gamma_4\ket{11}+\gamma_4\ket{22}+\ket{33})
\end{equation}
where $\gamma_4\approx 0.7393$ is the $5th$ root of the polynomial $x^8-8 x^7+12 x^6+24 x^5-34 x^4-24 x^3+12 x^2+8 x+1$, when the roots are sorted in increasing order.
The maximal quantum violation is expressed as a polynomial in $\gamma_4$:
\begin{equation}
\mu_4 = \frac{-\gamma_4^7+7\gamma_4^6-4\gamma_4^5-37\gamma_4^4+17\gamma_4^3 + 51 \gamma_4^2 -8\gamma_4 - 9}{3}
\end{equation}
and $\mu_4 \approx 2.9727$.

For $d=5$, the state conjectured to maximally violate the inequality is 
\begin{equation}
    \ket{\psi} = \frac{1}{\sqrt{2+2\gamma_{5_1}^2+\gamma_{5_2}^2}}(\ket{00}+\gamma_{5_1}\ket{11}+\gamma_{5_2}\ket{22}+\gamma_{5_1}\ket{33}+\ket{44}) 
\end{equation}
where $\gamma_{5_1}\approx 0.7189$ is $8th$ root of the polynomial $x^{12}-67x^{10}+294x^8-443x^6+286x^4-75x^2+5$ the and $\gamma_{5_2}\approx 0.6605$ is the $4th$ root of $x^6-14x^5-6x^4+52x^3-4x^2-40x+16$, when the roots are sorted in increasing order. 
The maximal conjectured quantum violation $\mu_5 = 3.0157$ is expressed the $6th$ in increasing order of the polynomial $x^6-65x^4+144x^2+96x+16$.

\section{CGLMP: Computations}
\label{sec:cglmpcomputations}
We already recomputed a known~\cite{Reichardt2013,Bamps_2015} SOS certificate~\eqref{eq: CHSHSOSmomentbasedrelaxation} for the CHSH inequality, which corresponds to the CGLMP inequality with $d=2$, certifying that the maximal quantum bound of this inequality is $2\sqrt{2}$.
For this certificate, the first level of the relaxation was sufficient to provide a tight upper bound, i.e. one that matches the lower bound given by an explicit model.
This level does not provide a tight bound anymore for $d \ge 3$; however, it was conjectured~\cite{Navascues_2008} that the level "$1+AB$" is always sufficient.
This level corresponds to the following generating sequence:
\begin{equation}
\label{eq:cglmpgeneratingsequence}
    \vec{\mathcal{Q}} = \begin{pmatrix} 1 \\ \vec{\mathcal{A}} \\ \vec{\mathcal{B}} \\ \vec{\mathcal{A}} \otimes \vec{\mathcal{B}} \end{pmatrix},
    \vec{\mathcal{A}} = \begin{pmatrix} \mathcal{A}_{0|0} \\ \vdots \\ \mathcal{A}_{d-2|0} \\ \mathcal{A}_{0|1} \\  \vdots \\ \mathcal{A}_{d-2|1} \end{pmatrix}, 
\vec{\mathcal{B}} = \begin{pmatrix} \mathcal{B}_{0|0} \\ \vdots \\ \mathcal{B}_{d-2|0} \\ \mathcal{B}_{0|1} \\ \vdots \\ \mathcal{B}_{d-2|0} \end{pmatrix},
\end{equation}
where all the elements are in the quotient ring defined after~\eqref{eq:PCGLMP}.

\begin{table}
\begin{tabular}{|c|c|c|c|c|c|}
\hline
d & 2 & 2 & 3 & 4 & 5 \\
\hline
level & 1 & 1+AB & 1+AB & 1+AB & 1+AB \\
\hline\hline
$|\vec{\mathcal{Q}}|$ & 5 & 9 & 25 & 49 & 81\\
$|\mathcal{M}|$ & 12 & 25 & 169 & 625 & 1681\\
$|\widetilde{\mathcal{M}}|$ & 2 & 2 & 11 & 26 & 53\\
\hline
\end{tabular}
\caption{Numbers of interest. $|\vec{\mathcal{Q}}|$ is the cardinality of the generating sequence. $|\mathcal{M}|$ and $|\widetilde{\mathcal{M}}|$ are the cardinalities of the linearly independent elements of $\Xi$ and respectively $\overline{\Xi}$. }
\label{tab:dimCGLMP}
\end{table}

In what follows, we start with $d=2$ for pedagogical reasons, and then tackle the cases $d=3$ and $d=4$.
We also give information for the $d=5$ case.
For each $d$, we go through the following steps.

\begin{enumerate}
    \item From the generating sequence $\vec{\mathcal{Q}}$ we form the symbolic matrix $\Xi$ and its average $\overline{\Xi}$.
    We extract a basis $\widetilde{M}$ for the averaged moments $\overline{M}$, see~\eqref{eq:basismoments}.
    The numbers of interest are shown in Table~\ref{tab:dimCGLMP}.
    \item We compute the representation $\rho$ of the group $G$ in~\eqref{eq:Gcglmp} acting on the generating sequence $\vec{\mathcal{Q}}$~\eqref{eq:cglmpgeneratingsequence}. We compute the decomposition of $\rho $~\eqref{eq:rhogirrep} into irreducible representations~\eqref{eq: 1d_irrep} and~\eqref{eq: 2d_irrep} and the change of basis matrix $I$.
    This is done in exact arithmetic.
    The decomposition provides the number and size of blocks in the semidefinite program.       
    \item We calculate the moment matrix $Z$ based on the conjectured optimal measurements~\eqref{eq:cglmpconjecturedmeasurements} and state~\eqref{eq:cglmpoptimalstate2}-\eqref{eq:cglmpoptimalstate4}) in exact arithmetic.
    We compute the block matrices $\tilde{Z}^{(i)}$ and their rank using the change of basis matrix $I$, in exact arithmetic.
    \item 
    Complementarity~\eqref{eq: complementarityblockdiag} gives us restrictions on the block that contribute to the sum-of-squares certificate.
    We observe numerically that the sum-of-squares problem~\eqref{eq:sossdpinvariant} has a family of solutions.
    We perform numerical experiments to set additional blocks $\tilde{X}^{(i)}$ to zero, while keeping the same numerical optimum.
    \item In exact arithmetic, we parameterize the sum-of-squares problem~\eqref{eq:sossdpinvariant} using a computer algebra system.
    Depending on the $d$, the problem is either completely solved, or there remains parameters to fill, that we set so that the SOS block is semidefinite positive. We recover a sum-of-squares certificate.
\end{enumerate}

We note that symmetrization helps doubly in this process.
First of all, it reduces the problem size so that it can be handled in exact arithmetic.
Second, it splits the sum-of-squares problem into blocks, most of whom can be safely neglected.

For computations in exact arithmetic, note that the complexity of the coefficients grows with $d$ even before solving the problem.
Starting with $d = 4$, the optimal state, and thus the coefficients of the moment matrix $Z$ involve a value $\gamma_4$ that cannot be expressed using radicals.
The trick is then to compute in the extension of the rational field generated by $\gamma_4$.
In general, the resulting field also contains the $\cos$ and $\sin$ values necessary to express~\eqref{eq: 2d_irrep} and thus the change of basis matrix $I$.
This is supported efficiently by the \texttt{ToNumberField} command in Mathematica~\cite{Mathematica}.
The resulting field is closed under the four standard operations, and the LDL decomposition~\eqref{eq:chol} can be computed without taking square roots.
Any expression of $\sqrt{D}$ present in the derivations below has to be understood symbolically: we keep the form of $D$ and, at the end, we can extract $(\sqrt{D})^2$ from the final sum-of-squares certificates.

\subsection{$d=2$}
We have already rederived twice a SOS certificate for the CHSH inequality-- or the CGLMP inequality for $d=2$ (Eq.~\eqref{eq: CHSHSOSmomentbasedrelaxation} and Eq.~\eqref{eq: CHSHSOSblockdiag}).

We now solve the problem for the level 1+AB described by the generating sequence~\eqref{eq:cglmpgeneratingsequence}.
Of course, the previous sum-of-squares certificate~\eqref{eq: CHSHSOSblockdiag} is a solution of this larger relaxation as well.
We will not learn anything new; however, this serves as a warmup before tackling larger dimensions.

\begin{table}[]
\begin{tabular}{|l|l|l|l|l|l|l|l|}
\hline
Irrep index $i$ & 1 & 2 & 3 & 4 & 5 & 6 & 7  \\ \hline \hline
Dimension $d_i$ & 1 & 1 & 1 & 1 & 2 & 2 & 2  \\ \hline
Multiplicity $m_i$ & 2 & 0 & 1 & 0 & 1 & 1 & 1  \\ \hline
$\operatorname{rank}(\widetilde{Z}^{(i)})$ & 1 & - & 1 & - & 1 & 0 & 0  \\ \hline
$\operatorname{rank}(\widetilde{X}^{(i)})$ & 0 & - & 0 & - & 0 & 0 & 1  \\ \hline
\end{tabular}
\caption{Dimension of the irreducible representations; the multiplicities provide the size of the semidefinite program blocks. We also show the rank of the primal/dual SDP blocks for CGLMP $d=2$, level $1+AB$. By complementarity, we always have $\operatorname{rank}(\widetilde{Z}^{(i)})+\operatorname{rank}(\widetilde{X}^{(i)})\le m_i$.}
\label{tap: info_d=2_1+AB}
\end{table}

\subsubsection*{Step 1}
We use the generating sequence~\eqref{eq:cglmpgeneratingsequence} which contains $9$ monomials.
There are $25$ linearly independent elements in $\Xi$, hence the cardinality of $\mathcal{M}$ is 25. 
The set of linearly independent elements after averaging over the group $\widetilde{\mathcal{M}}$ is of cardinality 3.
\begin{align}
        \widetilde{\mathcal{M}}_0 =& 1 \nonumber \\ 
        \widetilde{\mathcal{M}}_1 =& \mathcal{A}_{0|1} + \mathcal{B}_{0|0} -
        \mathcal{A}_{0|0}\mathcal{B}_{0|0} + \mathcal{A}_{0|0}\mathcal{B}_{0|1}
         \nonumber \\
        & -\mathcal{A}_{0|1}\mathcal{B}_{0|0} -\mathcal{A}_{0|1}\mathcal{B}_{0|1} \nonumber \\
        \widetilde{\mathcal{M}}_2 =& \mathcal{A}_{0|0}\mathcal{A}_{0|1}\mathcal{B}_{0|0}\mathcal{B}_{0|1}-
        \mathcal{A}_{0|0}\mathcal{A}_{0|1}\mathcal{B}_{0|1}\mathcal{B}_{0|0} \nonumber \\
        & -\mathcal{A}_{0|1}\mathcal{A}_{0|0}\mathcal{B}_{0|0}\mathcal{B}_{0|1}+
        \mathcal{A}_{0|1}\mathcal{A}_{0|0}\mathcal{B}_{0|1}\mathcal{B}_{0|0}
\end{align}
Hence, we have 2 optimization variables (since $\widetilde{y}_0=1$). 

\subsubsection*{Step 2}
The group $G$ leaving the inequality invariant is still $G$, defined in~\eqref{eq:Gchsh} and~\eqref{eq:Gcglmp}.
The representation $\rho: G \to \operatorname{GL}(\mathbb{C}, 9)$ is fully described by the images of the group generators $a$ and $x$:
\begin{equation}
        \rho_a =
         \begin{pmatrix}
         1 & 0 & 0 & 0 & 0 & 0 & 0 & 0 & 0 \\
         1 & 0 & 0 & 0 & -1 & 0 & 0 & 0 & 0 \\
         0 & 0 & 0 & 1 & 0 & 0 & 0 & 0 & 0 \\
         0 & 1 & 0 & 0 & 0 & 0 & 0 & 0 & 0 \\
         0 & 0 & 1 & 0 & 0 & 0 & 0 & 0 & 0 \\
         0 & 1 & 0 & 0 & 0 & 0 & -1 & 0 & 0 \\
         0 & 0 & 1 & 0 & 0 & 0 & 0 & 0 & -1 \\
         0 & 0 & 0 & 0 & 0 & 1 & 0 & 0 & 0 \\
         0 & 0 & 0 & 0 & 0 & 0 & 0 & 1 & 0 
         \end{pmatrix},
\end{equation}
and
\begin{equation}
     \rho_x = 
         \begin{pmatrix}
         1 & 0 & 0 & 0 & 0 & 0 & 0 & 0 & 0 \\
         0 & 0 & 0 & 0 & 1 & 0 & 0 & 0 & 0 \\
         0 & 0 & 0 & 1 & 0 & 0 & 0 & 0 & 0 \\
         0 & 0 & 1 & 0 & 0 & 0 & 0 & 0 & 0 \\
         0 & 1 & 0 & 0 & 0 & 0 & 0 & 0 & 0 \\
         0 & 0 & 0 & 0 & 0 & 0 & 0 & 0 & 1 \\
         0 & 0 & 0 & 0 & 0 & 0 & 1 & 0 & 0 \\
         0 & 0 & 0 & 0 & 0 & 0 & 0 & 1 & 0 \\
         0 & 0 & 0 & 0 & 0 & 1 & 0 & 0 & 0 \\
        \end{pmatrix}.
\end{equation}
The irreducible representations can be read alternatively in~\eqref{eq:chshonedimirreps}-\eqref{eq:chshtwodimirreps} or in \eqref{eq: 1d_irrep}-\eqref{eq: 2d_irrep}.
The representation $\rho$ has a decomposition into five out of the seven irreducible representations
\begin{equation}
    I^{-1} \rho_g I = \left(\mathbb{1}_2 \otimes \sigma_g^{(1)}\right) \oplus \sigma_g^{(3)}\oplus \sigma_g^{(5)}\oplus \sigma_g^{(6)}\oplus \sigma_g^{(7)}
\end{equation}
where the change of basis matrix I computed as explained in Section \ref{subsec: decrep} is given by
\begin{equation}
I=
\begin{pmatrix}
 1 & 0 & 0 & 0 & 0 & 0 & 0 & 0 & 0 \\
 \frac{1}{2} & 0 & 0 & -\frac{1}{4} & \frac{c_+}{4} & 0 & 0 & -\frac{1}{4} & \frac{c_-}{4} \\
 \frac{1}{2} & 0 & 0 & -\frac{c_+}{4} & -\frac{1}{4} & 0 & 0 & \frac{c_-}{4} & \frac{1}{4} \\
 \frac{1}{2} & 0 & 0 & -\frac{c_+}{4} & \frac{1}{4} & 0 & 0 & \frac{c_-}{4} & -\frac{1}{4} \\
 \frac{1}{2} & 0 & 0 & -\frac{1}{4} & -\frac{c_+}{4} & 0 & 0 & -\frac{1}{4} & -\frac{c_-}{4} \\
 \frac{3}{8} & -\frac{1}{4} & \frac{1}{8} & -\frac{d_+}{8} & \frac{d_+}{8} & 0 & -\frac{1}{4} & \frac{d_-}{8} &
   \frac{d_-}{8} \\
 \frac{1}{8} & \frac{1}{4} & \frac{1}{8} & -\frac{1}{4} & 0 & \frac{1}{4} & 0 & -\frac{1}{4} & 0 \\
 \frac{3}{8} & -\frac{1}{4} & -\frac{1}{8} & -\frac{c_+}{4} & 0 & \frac{1}{4} & 0 & \frac{c_-}{4} & 0 \\
 \frac{3}{8} & -\frac{1}{4} & \frac{1}{8} & -\frac{d_+}{8} & -\frac{d_+}{8} & 0 & \frac{1}{4} & \frac{d_-}{8} &
   -\frac{d_-}{8}
\end{pmatrix}
\end{equation}
where $c_\pm = \sqrt{2}\pm1$ and $d_\pm= \sqrt{2}\pm 2$.

This means that the reduced semidefinite program has five blocks, four of dimension $1$ and one of dimension $2$, see Table~\ref{tap: info_d=2_1+AB}.

\subsubsection*{Step 3}
After the blockdiagonalization, $\overline{Z}$ splits into 5 blocks.
The dual SDP reads
\begin{subequations}
    \begin{align}
        \max_{\vec{\widetilde{y}}} \;\; &4-2\widetilde{y}_1\\
        \text{s.t. } 
        &\widetilde{Z}^{(1)} = 
            \begin{pmatrix}
             1 & \widetilde{y}_1 \\ \widetilde{y}_1 & \widetilde{y}_1 + \widetilde{y}_2
            \end{pmatrix}\geq0,  \label{eq: CHSH2dualCons1}\\
        &\widetilde{Z}^{(3)} = 1+4\widetilde{y}_2 \geq 0, \label{eq: CHSH2dualCons2}\\
        &\widetilde{Z}^{(5)} = \frac{1}{2} - c_- \widetilde{y}_1 \geq 0, \label{eq: CHSH2dualCons3}\\
        &\widetilde{Z}^{(6)} =\frac{1}{2}-2\widetilde{y}_2  \geq 0, \label{eq: CHSH2dualCons4}\\
        &\widetilde{Z}^{(7)} =\frac{1}{2} + c_+ \widetilde{y}_1 \geq 0, \label{eq: CHSH2dualCons5}
    \end{align}
    \label{eq: dualCHSH2}
\end{subequations}
and the conjectured optimal measurements correspond to $\tilde{y}_1 = -c_-/2$ and $\tilde{y}_2 = 1/4$, which corresponds to the numerical solution.
The rank of the blocks in this moment-based relaxation are shown in Table~\ref{tap: info_d=2_1+AB}.

\subsubsection*{Step 4}
Complementarity suggests that the variables $\widetilde{X}^{(3)}=\widetilde{X}^{(5)}=0$.
Further numerical investigations enable us to set $\widetilde{X}^{(1)}=\mathbb{0}_2$ and  $\widetilde{X}^{(6)}=0$.
Hence, only $\widetilde{X}^{(7)}$ will matter in the SOS. 
The SOS relaxation turns out to be a linear program and is given by
\begin{equation}
\begin{split}
        \min_{\widetilde{X}^{(7)}}\;\; &2 +\widetilde{X}^{(7)}\\
        \text{s.t.}\;\; &2(\sqrt{2}-1)\widetilde{X}^{(7)} = 4\\
        & \widetilde{X}^{(7)}\geq 0.
\end{split}
\end{equation}
The first constraint fully determines the solution of the LP and $\widetilde{X}^{(7)}= 2(\sqrt{2}-1)$.

\subsubsection*{Step 5}
It is straightforward to extract the SOS certificate
\begin{equation}
    \begin{split}
    \mu_2 - \mathcal{E}_2 = c_- \Big(\|A_{0|0}+c_+ A_{0|1}-c_+
   B_{0|0}-B_{0|1}\|^2+\\
   \|-c_+ A_{0|0}+A_{0|1}+B_{0|0}-c_+B_{0|1}+\sqrt{2}\|^2\Big).
   \end{split}
\end{equation}

\subsection{CGLMP $d=3$} 
The CGLMP inequality for $d=3$ in the Collin-Gisin notation is given by
\begin{equation}
    \begin{aligned}
        \mathcal{E}_3 = &2 - 3(\mathcal{A}_{0|1}+ \mathcal{A}_{1|1}+ \mathcal{B}_{0|0}+\mathcal{B}_{1|0}- \\
        &\mathcal{A}_{0|1} \mathcal{B}_{0|0}- \mathcal{A}_{0|1} \mathcal{B}_{0|1}- \mathcal{A}_{0|1} \mathcal{B}_{1|0}- \mathcal{A}_{0|0} \mathcal{B}_{0|0}- \\
        &\mathcal{A}_{1|0}\mathcal{B}_{0|0}+ \mathcal{A}_{0|0} \mathcal{B}_{0|1}+ \mathcal{A}_{1|0} \mathcal{B}_{0|1}- \mathcal{A}_{1|1} \mathcal{B}_{0|1}-\\
        &\mathcal{A}_{1|0} \mathcal{B}_{1|0}-\mathcal{A}_{1|1} \mathcal{B}_{1|0}+ A_{1|0} \mathcal{B}_{1|1}- A_{1|1} \mathcal{B}_{1|1})
    \end{aligned}
\end{equation}

\subsubsection*{Step 1}
We use the generating sequence~\eqref{eq:cglmpgeneratingsequence} which contains $25$ monomials.
There are $169$ linearly independent elements in $\Xi$, hence the cardinality of $\mathcal{M}$ is 169. 
The set of linearly independent elements after averaging over the group $\widetilde{\mathcal{M}}$ is of cardinality 11. Hence, we have 10 optimization variables (since $\widetilde{y}_0=1$). 

\subsubsection*{Step 2}
We remind that the group G leaving the inequality invariant is given in~\eqref{eq:Gcglmp}.
One can deduce the representations $\rho_g: G \rightarrow \text{GL}(\mathbb{C},25)$ of the group by considering the action of G on the generating sequence $\vec{\mathcal{Q}}$.
We want to decompose $\rho$ into irreducible representation of G.
The group has 9 irreducible representations $\sigma^{(i)}$ with $i=1,...,9$.
Four of the irreducible representations are one dimensional \eqref{eq: 1d_irrep} and five two dimension \eqref{eq: 2d_irrep}.
It turns out that $\rho$ decomposes into two out of the four irreducible representations of dimension 1 with multiplicity 3 and 2 respectively; and the five irreducible representations of dimension 2 with each multiplicity 2. 
The summary of the decomposition can be found in Table~\ref{tab: info_d=3}.
In the Appendix, we give the relevant part of the the change of basis matrix $I$.

\begin{table}[]
\begin{tabular}{|l|l|l|l|l|l|l|l|l|l|}
\hline
$i$ & 1 & 2 & 3 & 4 & 5 & 6 & 7 &8 & 9 \\ \hline \hline
$d_i$ & 1 & 1 & 1 & 1 & 2 & 2 & 2 & 2 & 2 \\ \hline
$m_i$ & 3 & 0 & 2 & 0 & 2 & 2 & 2 & 2 & 2 \\ \hline
rank($\widetilde{Z}^{(i)}$) & 2 & - & 1 & - & 1 & 1 & 0 & 0 & 1 \\ \hline
rank($\widetilde{X}^{(i)}$) & 0 & - & 0 & - & 0 & 0 & 0 & 2 & 1 \\ \hline
\end{tabular}
\caption{Numbers of interest for CGLMP $d=3$, $1+AB$. In the first row we give the label of the irreducible representations, in the second row we give the respective dimensions of the irreducible representations while in the third row we have their multiplicities.}
\label{tab: info_d=3}
\end{table}

\subsubsection*{Step 3 and 4}
Numerical investigations allows us to make the Ansatz that only $\widetilde{X}^{(8)}$ and $\widetilde{X}^{(9)}$ will contribute to the SOS and set the other $\widetilde{X}^{(i)}=0$ with $i=1,...,7$. We use the conjectured states and measurements to construct $Z$ and then blockdiagonalize the matrix using Proposition~\ref{prop:blockdiag} to get $I^{-(8)}ZI^{(8)} = \widetilde{Z}^{(8)} \otimes \mathbb{1}_{n_8}$ and $I^{-(9)}ZI^{(9)} = \widetilde{Z}^{(9)} \otimes \mathbb{1}_{n_9}$.

\subsubsection*{Step 5}
Let us take advantage of the strong duality to parametrize $\widetilde{X}^{(8)}$ and $\widetilde{X}^{(9)}$
\begin{equation}
    \begin{aligned}
        &\widetilde{Z}^{(8)} = \mathbb{0}_2,\;\;\;
        \widetilde{X}^{(8)} = \begin{pmatrix}
         x_1 & x_2 \\
         x_2 & x_3
        \end{pmatrix} \\
        &\widetilde{Z}^{(9)} = \frac{4}{99}\begin{pmatrix}
          c_1 & c_3 \\
         c_3 & c_2
        \end{pmatrix}, \;\;\;
        \widetilde{X}^{(9)} = x_4\begin{pmatrix}
          c_4 & 1 \\
         1 & c_5
        \end{pmatrix}
    \end{aligned}
\end{equation}
with $c_1 = 143 + 77 \sqrt{3} - 39 \sqrt{11} - 21 \sqrt{33}$, $c_2= 110 + 55 \sqrt{3} - 21 \sqrt{11} - 14 \sqrt{33}$, $c_3=121 + 66 \sqrt{3} - 30 \sqrt{11}- 17 \sqrt{33}$, $c_4=\frac{1}{22} (44  - 11\sqrt{3} + 9 \sqrt{11} - 4\sqrt{33})$ and $c_5=\frac{1}{16} (17 - \sqrt{3} - 3 \sqrt{11} + \sqrt{33})$.

It remains to solve the SDP yielding the SOS in its blockdiagonal form 
\begin{equation}
    \begin{aligned}
        \min_{x_1,x_2,x_3,x_4}\;\;\; &2 + \tr[A_0^{(8)}\widetilde{X}^{(8)}]+ \tr[A_0^{(9)}\widetilde{X}^{(9)}]\\
        \text{s.t. } & \tr[A_k^{(8)}\widetilde{X}^{(8)}]+ \tr[A_k^{(9)}\widetilde{X}^{(9)}] = -b_k \;\; k = 1,...,10\\
        &\widetilde{X}^{(8)} \succeq 0, \widetilde{X}^{(9)} \succeq 0
    \end{aligned}    
\end{equation}

At the optimum $\widetilde{X}^{{(8)}^\star}$ and $\widetilde{X}^{{(9)}^\star}$ are respectively given by 
\begin{widetext}
\begin{equation}
    \begin{split}
        &\widetilde{X}^{{(8)}^\star} 
        = \begin{pmatrix}
             \frac{1}{48} \left(15+\sqrt{33}\right) & \frac{3}{32} \left(-5-4 \sqrt{3}+\sqrt{33}\right) \\
 \frac{3}{32} \left(-5-4 \sqrt{3}+\sqrt{33}\right) & -\frac{3}{16} \left(-3-4 \sqrt{3}+\sqrt{33}\right)
        \end{pmatrix}\\
        &\widetilde{X}^{{(9)}^\star} 
        = \begin{pmatrix}
             \frac{1}{16} \left(5 \sqrt{33}-21\right) & -\frac{1}{64} \left(16+\sqrt{3}-3 \sqrt{11}\right) \left(\sqrt{33}-3\right) \\
 -\frac{1}{64} \left(16+\sqrt{3}-3 \sqrt{11}\right) \left(\sqrt{33}-3\right) & \frac{1}{16} \left(3+5 \sqrt{3}\right)
   \left(\sqrt{11}-3\right) \\
        \end{pmatrix}
    \end{split}
\end{equation}

The Choleski decomposition yields

\begin{equation}
    \begin{split}
    L_8 &=  \begin{pmatrix}
                 1 & 0 \\
                 \frac{3}{32}(-27 - 15 \sqrt{3} + 3\sqrt{11} + 5 \sqrt{33}) & 1 \\
            \end{pmatrix}, \\
    D_8 &= \operatorname{diag}\Big(\frac{15 + \sqrt{33}}{48},\frac{3}{128}(-156 - 49\sqrt{3} + 45\sqrt{11}+ 16\sqrt{33})\Big), \\
    L_9 &= \begin{pmatrix}
                 1 & 0 \\
                 \frac{-1632 + 96\sqrt{3} + 288\sqrt{11} - 96\sqrt{33}}{1536} & 1 \\
            \end{pmatrix},\\
    D_9 &= \operatorname{diag}\Big(\frac{5\sqrt{33}-21}{16},0\Big).   
    \end{split}
\end{equation}
\end{widetext}
We know have all elements to write the sum of square SOS 
\begin{equation}
    \begin{split}
   \mu_3-\mathcal{E}_3 =
   &||(\sqrt{D}_8L_8^\dagger\otimes \mathbb{1}_{d_8})I^{(8)} \vec{\mathcal{Q}}||^2 +\\
   &||(\sqrt{D}_9L_9^\dagger\otimes \mathbb{1}_{d_9})I^{(9)} \vec{\mathcal{Q}}||^2.
   \end{split}
\end{equation}

\subsection{CGLMP $d=4$}
For $d=4$, the CGLMP inequality reads
\begin{equation}
    \begin{split}
        \mathcal{E}_4 = &2-\frac{8}{3}(\mathcal{A}_{0|1}+\mathcal{A}_{1|1}+\mathcal{A}_{2|1}+\mathcal{B}_{0|0}+\mathcal{B}_{1|0}+\mathcal{B}_{2|0}-\\
        &\mathcal{A}_{0|1} \mathcal{B}_{0|0}-\mathcal{A}_{0|1} \mathcal{B}_{0|1}-\mathcal{A}_{0|1} \mathcal{B}_{1|0}-\mathcal{A}_{0|1} \mathcal{B}_{2|0}-\\
        &\mathcal{A}_{0|0}\mathcal{B}_{0|0}-\mathcal{A}_{1|0} \mathcal{B}_{0|0}-\mathcal{A}_{2|0} \mathcal{B}_{0|0}+\mathcal{A}_{0|0} \mathcal{B}_{0|1}+\\
        &\mathcal{A}_{1|0}\mathcal{B}_{0|1}-\mathcal{A}_{1|1}\mathcal{B}_{0|1}+\mathcal{A}_{2|0} \mathcal{B}_{0|1}-\mathcal{A}_{2|1} \mathcal{B}_{0|1}-\\
        &\mathcal{A}_{1|0} \mathcal{B}_{1|0}-\mathcal{A}_{1|1} \mathcal{B}_{1|0}-\mathcal{A}_{2|0}\mathcal{B}_{1|0}+\mathcal{A}_{1|0} \mathcal{B}_{1|1}-\\
        &\mathcal{A}_{1|1} \mathcal{B}_{1|1}+\mathcal{A}_{2|0} \mathcal{B}_{1|1}-\mathcal{A}_{2|1} \mathcal{B}_{1|1}-\mathcal{A}_{1|1}\mathcal{B}_{2|0}-\\
        &\mathcal{A}_{2|0} \mathcal{B}_{2|0}-\mathcal{A}_{2|1} \mathcal{B}_{2|0}+\mathcal{A}_{2|0} \mathcal{B}_{2|1}-\mathcal{A}_{2|1}\mathcal{B}_{2|1})
    \end{split}
\end{equation}

\subsubsection*{Step 1}
We use the generating sequence~\eqref{eq:cglmpgeneratingsequence} which contains $49$ monomials.
There are $625$ linearly independent elements in $\Xi$, hence the cardinality of $\mathcal{M}$ is 625. 
There are after averaging over the group 26 linearly independent elements in  $\widetilde{\mathcal{M}}$. Hence, we have 25 optimization variables (since $\widetilde{y}_0=1$). 

\subsubsection*{Step 2}
The group G leaving the inequality invariant is the dihedral group of order 16, $D_{16}$ in~\eqref{eq:Gcglmp}.
The representations of the group $\rho_g: G \rightarrow \text{GL}(\mathbb{C},49)$ are computed by considering the action of G on the generating sequence $\vec{\mathcal{Q}}$.
We want to decompose $\rho$ into irreducible representation of G.
The group has 11 irreducible representations $\sigma^{(i)}$ with $i=1,...,11$. 
Four of the irreducible representations are one dimensional \eqref{eq: 1d_irrep} and seven two dimension \eqref{eq: 2d_irrep}.
It turns out that $\rho$ decomposes into two out of the four irreducible representations of dimension 1 with multiplicity 4 and 3 respectively; and the seven irreducible representations of dimension 2 with each multiplicity 2. 
The summary of the decomposition can be found in Table~\ref{tab:info_d=4}.
The change of basis matrix I is given in the Mathematica notebook.

\subsubsection*{Step 3 and 4}
Also for $d=4$, we used the conjecture to construct $Z$ and then blockdigonalized the matrix to get $Z^{\text{full}}$.\\
Using numerics, we could infer that only three out of the nine blocks of $\widetilde{X}^{\text{full}}$ need to be non-zero.
Complementarity enabled us to reduce the number of free parameters to 12. 

\subsubsection*{Step 5}
For more details about the analytical SOS, we invite the reader to consult the Mathematica file where we provide $I$ and $\widetilde{X}$ \cite{Ioannou2021}.

\begin{table}[]
\begin{tabular}{|l|l|l|l|l|l|l|l|l|l|l|l|}
\hline
$i$ & 1 & 2 & 3 & 4 & 5 & 6 & 7 & 8 & 9 & 10 & 11\\ \hline \hline
$d_i$ & 1 & 1 & 1 & 1 & 2 & 2 & 2 & 2 & 2 & 2 & 2 \\ \hline
$m_i$ & 4 & 0 & 0 & 3 & 3 & 3 & 3 & 3 & 3 &3 & 3\\ \hline
rank($\widetilde{Z}^{(i)}$) & 2 & - & - & 2 & 2 & 1 & 1 & 0 & 0 & 1 & 1 \\ \hline
rank($\widetilde{X}^{(i)}$) & 0 & - & - & 0 & 0 & 0 & 0 & 0 & 3 & 2 & 2  \\ \hline
\end{tabular}
\caption{Numbers of interest for CGLMP $d=4$, $1+AB$. In the first row we give the label of the irreducible representations, in the second row we give the respective dimensions of the irreducible representations while in the third row we have their multiplicities.}
\label{tab:info_d=4}
\end{table}

\subsection{CGLMP $d \ge 5$}

\begin{table}[]
\begin{tabular}{|l|l|l|l|l|l|l|l|l|l|l|l|l|l|}
\hline
$i$ & 1 & 2 & 3 & 4 & 5 & 6 & 7 & 8 & 9 & 10 & 11 & 12 & 13\\ \hline \hline
$d_i$ & 1 & 1 & 1 & 1 & 2 & 2 & 2 & 2 & 2 & 2 & 2 & 2 & 2\\ \hline
$m_i$ & 5 & 0 & 0 & 4 & 4 & 4 & 4 & 4 & 4 & 4 & 4 & 4 & 4\\ \hline
rank($\widetilde{Z}^{(i)}$) & 3 & - & - & 2 & 2 & 2 & 1 & 1 & 0 & 0 & 1 & 1 & 2 \\ \hline
rank($\widetilde{X}^{(i)}$) & 0 & - & - & 0 & 0 & 0 & 0 & 0 & 0 & 3 & 3 & 3 & 2  \\ \hline
\end{tabular}
\caption{Numbers of interest for CGLMP $d=5$, $1+AB$. In the first row we give the label of the irreducible representations, in the second row we give the respective dimensions of the irreducible representations while in the third row we have their multiplicities.}
\label{tab: info_d=4}
\end{table}

We performed steps 1. to 4. numerically for $d=5$ to investigate the scaling of our method, and present the results in Table~\ref{tab: info_d=4}.
We stopped short of computing an exact SOS decomposition.
The conjectured quantum bound and corresponding coefficients of the state were given in Section~\ref{sec:cglmpgeneralities}.
However, we were not able to express them in a common field extension of the rationals: using Mathematica, the polynomial coefficients are then of the order of the million, which renders any further computation cumbersome.

We hoped initially to find family of SOS certificates that would be valid for any $d$.
Without any additional insight on the structure of the problem, this goal is probably out of reach.
Indeed, from our observations, it seems that the multiplicity $m_i$ of the blocks scales linearly with $d$.
This is not surprising, as the relaxation level 1+AB is required to reach optimality.
The length of the generating sequence then scales as $\mathcal{O}(d^2)$, while the group order scales as $\mathcal{O}(d)$.
In comparison, in Ref.~\cite{Tavakoli2019}, the group order grew proportionally with the length of the generating sequence, leading to semidefinite relaxations of constant complexity for any $d$.

\section{Sliwa}
\label{sec:Sliwa}

All facet Bell inequalities have been derived for the scenario consisting of 3 parties with each two measurements and two outcomes by Sliwa~\cite{Sliwa_2003}. 
Sliwa showed that there are 46 inequivalent inequalities. We picked 4 inequalities out of the 46 and computed their sum of square decomposition. 

In Tab.~\ref{tab: Sliwa} one can find for each of the 46 inequalities the symmetry group, the order of the group, the multiplicities $m_i$ and dimensions $d_i$ of the irreducible representations, the cardinality of the generating sequence before and after averaging.
For the generating sequence we used $\vec{\mathcal{Q}}^{\top} = (1,\mathcal{A}_{0|0},\mathcal{A}_{0|1})\otimes(1,\mathcal{B}_{0|0},\mathcal{B}_{0|1})\otimes(1,\mathcal{C}_{0|0},\mathcal{C}_{0|1})$. 
The sequence contains 27 terms.
For some inequalities one gets the same almost quantum bound if one takes only the generating terms with maximum two operators and hence the generating sequence has 19 terms instead of 27. 
To get the SOS, we applied the formalism seen before.
First, we solved analytically the blockdiagonalized moment-based SDP relaxation and then used the complementarity to derive the SOS relaxation.
Note that we did not use any conjectured quantum state and measurements in this case.
Second, we computed analytically the solution of the SOS relaxation and constructed the SOS.
It is straightforward to check the validity of the SOS presented below by expanding the terms.

\begin{widetext}
Sliwa 3rd inequality:
\begin{equation}
    \scalemath{1}{
    \begin{split}
        2\sqrt{2}-\mathcal{E}_{S_3} =
        &\frac{\|2 \sqrt{2} \mathcal{A}_{0|0} \left(1-2 \mathcal{B}_{0|0}\right)+2 \sqrt{2} \mathcal{B}_{0|0}+2 \mathcal{C}_{0|0}+2
   \mathcal{C}_{0|1}-\sqrt{2}-2\|{}^2}{2 \sqrt{2}}+\\
         &\frac{\|2 \sqrt{2} \mathcal{A}_{0|1} \left(1-2 \mathcal{B}_{0|1}\right)+2 \sqrt{2} \mathcal{B}_{0|1}+2 \mathcal{C}_{0|0}-2
   \mathcal{C}_{0|1}-\sqrt{2}\|{}^2}{2 \sqrt{2}}
    \end{split}}
\end{equation}
with 
\begin{equation}
    \begin{split}
       \mathcal{E}_{S_3}= &8 \mathcal{A}_{0|1} \mathcal{B}_{0|1} \mathcal{C}_{0|1}+8 \mathcal{A}_{0|1} \mathcal{B}_{0|1} \mathcal{C}_{0|2}+8 \mathcal{A}_{0|2} \mathcal{B}_{0|2} \mathcal{C}_{0|1}-8 \mathcal{A}_{0|2}\mathcal{B}_{0|2} \mathcal{C}_{0|2}-8 \mathcal{A}_{0|1} \mathcal{B}_{0|1}-\\&4 \mathcal{A}_{0|1} \mathcal{C}_{0|1}-4 \mathcal{A}_{0|1}\mathcal{C}_{0|2}-4 \mathcal{A}_{0|2}\mathcal{C}_{0|1}+4 \mathcal{A}_{0|2} \mathcal{C}_{0|2}-4 \mathcal{B}_{0|1} \mathcal{C}_{0|1}-4 \mathcal{B}_{0|2} \mathcal{C}_{0|1}-4 \mathcal{B}_{0|1}\mathcal{C}_{0|2}+4 \mathcal{B}_{0|2} \mathcal{C}_{0|2}+\\&4 \mathcal{A}_{0|1}+4 \mathcal{B}_{0|1}+4 \mathcal{C}_{0|1}-2
   \end{split}
\end{equation}

Sliwa 10th inequality:
\begin{equation}
    \begin{split}
    4-\mathcal{E}_{S_{10}} = 
        8 \|(\mathcal{C}_{0|0} \left(\mathcal{\mathcal{A}}_{0|0} \mathcal{\mathcal{B}}_{0|0}-1\right)+\mathcal{\mathcal{A}}_{0|1} \left(-\mathcal{\mathcal{B}}_{0|1}
        \left(\mathcal{C}_{0|0}-1\right)+\mathcal{\mathcal{B}}_{0|0} \left(\mathcal{C}_{0|1}-1\right)+\mathcal{C}_{0|0}-\mathcal{C}_{0|1}\right)  \\ +\mathcal{B}_{0|1}
        \left(-\left(\mathcal{A}_{0|0}-1\right) \mathcal{C}_{0|1}+\mathcal{C}_{0|0}-1\right)\|^2
      \end{split}
\end{equation}
with 
\begin{equation}
    \begin{split}
\mathcal{E}_{S_{10}} = &8 A_{0|0} \mathcal{B}_{0|0} \mathcal{C}_{0|0}+8 A_{0|1}\mathcal{B}_{0|0} \mathcal{C}_{0|1}-8 A_{0|1} \mathcal{B}_{0|1} \mathcal{C}_{0|0}-8 A_{0|0}\mathcal{B}_{0|1} \mathcal{C}_{0|1}-\\&8 A_{0|1} \mathcal{B}_{0|0}+8 A_{0|1} \mathcal{B}_{0|1}+8 A_{0|1} \mathcal{C}_{0|0}-8 A_{0|1}\mathcal{C}_{0|1}+8 \mathcal{B}_{0|1} \mathcal{C}_{0|0}+8 \mathcal{B}_{0|1} \mathcal{C}_{0|1}-8 \mathcal{B}_{0|1}-8 \mathcal{C}_{0|0}+4
   \end{split}
\end{equation}

Sliwa 11th inequality:
\begin{equation}
    \scalemath{1}{
    \begin{split}
       4\sqrt{2}-\mathcal{E}_{S_{11}} =  
       &\|\sqrt{\sqrt{2}-1} \left(\mathcal{A}_{0|0}+\mathcal{A}_{0|1}\right)
   \left(\mathcal{B}_{0|0}-C_{0|0}\right)+\sqrt{\sqrt{2}+1} \left(\mathcal{A}_{0|0}-\mathcal{A}_{0|1}\right)
   \left(\mathcal{B}_{0|1}+C_{0|1}-1\right)\|^2+\\
   &\|\sqrt{\sqrt{2}+1} \left(\mathcal{A}_{0|0}+\mathcal{A}_{0|1}\right)
   \left(\mathcal{B}_{0|0}+C_{0|0}-1\right)-\sqrt{\sqrt{2}-1} \left(\mathcal{A}_{0|0}-\mathcal{A}_{0|1}\right)
   \left(\mathcal{B}_{0|1}-C_{0|1}\right)\|^2+\\
   &\|\sqrt{\sqrt{2}+1} \left(\mathcal{A}_{0|0}-\mathcal{A}_{0|1}\right)
   \left(\mathcal{B}_{0|0}+C_{0|0}-1\right)+\sqrt{\sqrt{2}-1} \left(\mathcal{A}_{0|0}+\mathcal{A}_{0|1}-2\right)
   \left(\mathcal{B}_{0|1}-C_{0|1}\right)\|^2+\\
   &\|\sqrt{\sqrt{2}-1} \left(\mathcal{A}_{0|0}-\mathcal{A}_{0|1}\right)
   \left(\mathcal{B}_{0|0}-C_{0|0}\right)-\sqrt{\sqrt{2}+1} \left(\mathcal{A}_{0|0}+\mathcal{A}_{0|1}-2\right)
   \left(\mathcal{B}_{0|1}+C_{0|1}-1\right)\|^2
    \end{split}}
\end{equation}
with
\begin{equation}
    \begin{split}
    \mathcal{E}_{S_{11}} = &-8 \mathcal{A}_{0|0} \mathcal{B}_{0|0} \mathcal{C}_{0|0}-8 \mathcal{A}_{0|1} \mathcal{B}_{0|0} \mathcal{C}_{0|0}-8 \mathcal{A}_{0|0} \mathcal{B}_{0|0} \mathcal{C}_{0|1}+8 \mathcal{A}_{0|1}\mathcal{B}_{0|0} \mathcal{C}_{0|1}+8 \mathcal{A}_{0|0} \mathcal{B}_{0|1} \mathcal{C}_{0|0}-8 \mathcal{A}_{0|1} \mathcal{B}_{0|1} \mathcal{C}_{0|0}+\\&8 \mathcal{A}_{0|0} \mathcal{B}_{0|1}\mathcal{C}_{0|1}+8 \mathcal{A}_{0|1} \mathcal{B}_{0|1} \mathcal{C}_{0|1}+8 \mathcal{A}_{0|0} \mathcal{B}_{0|0}-8 \mathcal{A}_{0|0} \mathcal{B}_{0|1}+8 \mathcal{A}_{0|1}\mathcal{C}_{0|0}-8 \mathcal{A}_{0|1} \mathcal{C}_{0|1}-16 \mathcal{B}_{0|1} \mathcal{C}_{0|1}+8 \mathcal{B}_{0|1}+8 \mathcal{C}_{0|1}-4
   \end{split}
\end{equation}

Sliwa 14th inequality:
\begin{equation}
    \begin{split}
        4\sqrt{2}-\mathcal{E}_{S_{14}} =  &\sqrt{2} \|2 \mathcal{\mathcal{B}}_{0|1} \left(\mathcal{\mathcal{A}}_{0|0}+\mathcal{\mathcal{A}}_{0|1}-\sqrt{2}\mathcal{C}_{0|0}+\sqrt{2}\mathcal{C}_{0|1}-1\right)-2 \mathcal{\mathcal{A}}_{0|1}+\sqrt{2} \mathcal{C}_{0|0}-\sqrt{2} \mathcal{C}_{0|1}+1\|^2+\\
       &\sqrt{2} \|2 \mathcal{\mathcal{B}}_{0|0}-\sqrt{2} \mathcal{C}_{0|0}-\sqrt{2} \mathcal{C}_{0|1}+\sqrt{2}-1\|{}^2
    \end{split}
\end{equation}

\begin{equation}
    \begin{split}
    \mathcal{E}_{S_{14}}= &8 \mathcal{A}_{0|0} \mathcal{B}_{0|1} \mathcal{C}_{0|0}-8 \mathcal{A}_{0|1} \mathcal{B}_{0|1} \mathcal{C}_{0|0}-8 \mathcal{A}_{0|0} \mathcal{B}_{0|1} \mathcal{C}_{0|1}+8 \mathcal{A}_{0|1}\mathcal{B}_{0|1} \mathcal{C}_{0|1}+8 \mathcal{A}_{0|1} \mathcal{C}_{0|0}-8 \mathcal{A}_{0|1} \mathcal{C}_{0|1}+8 \mathcal{B}_{0|0} \mathcal{C}_{0|0}+8 \mathcal{B}_{0|0}\mathcal{C}_{0|1}-\\&8 \mathcal{B}_{0|0}-8 \mathcal{C}_{0|0}+4
    \end{split}
\end{equation}

\end{widetext}

\begin{widetext}
\begin{figure}[h]
\begin{center}
\label{tab: Sliwa}
\begin{tabular}{|l||l|c|l|l|c|c|}
\hline
Sliwa & Group & order & $m_i$ & $d_i$ & $|\mathcal{Q}|$ & $|\widetilde{\mathcal{M}}|$ \\ \hline\hline
1 & $C_2\times S_4$ & 48 & 4  1  2  2  1  3 & 1  1  2  3  3  3 & 27 & 4 \\\hline
2 & $C_4^2 \rtimes D_6$ & 192 & 1  1  2 & 1  6  6 & 19 & 3 \\\hline
3 & $C_2 \wr C_2^2$ & 64 &1  2  1  1  2 & 1  2  2  4  4 & 19 & 2\\\hline
4 & $C_2\times D_8$ & 32 & 4  1  2  2  1  2  1  1  2  2 & 1  1  1  1  2  2  2  2  2  2 & 27 & 6 \\\hline
5 & $D_3$ & 6 & 10 1  8 & 1  1  2 & 27 & 35 \\\hline
6 & $C_2$ & 2 & 15  12 & 1  1 & 27 & 60 \\\hline
7 & $S_4$ & 24 & 1  5  1 & 1  3  3 & 19 & 10 \\\hline 
8 & $C_2^3$ & 8 & 4  2  1  3  3  3  3 & 1  1  1  1  1  1  1  1 & 19 & 16 \\\hline
9 & $C_2^2$ & 4 & 4 7 3 5 & 1 1 1 1 & 19 & 23 \\ \hline
10 & $S_4$ & 24 & 3  4  1  3  1 & 1  3  2  3  1 & 27 & 6\\\hline
11 & $C_2\times D_4$ & 16 & 2  2  1  2  3  3 & 1  1  1  1  2  2 & 19 & 10 \\\hline
12 & $C_2^3$ & 8 & 5  3  5  5  2  3  2  2 & 1  1  1  1  1  1  1  1 & 27 & 21\\\hline
13 & $C_2^3$ & 8 & 2  1  2  2  3  2  4  3 & 1  1  1  1  1  1  1  1 & 19 & 12\\\hline
14 & $C_2^3$ & 8 & 3  2  3  3  2  1  3  2 & 1  1  1  1  1  1  1  1 & 19 & 10\\\hline
15 & $C_2 \times D_4$ & 16 & 3  2  2  1  1  1  1  2  2 & 1  1  1  1  1  1  1  2  2 & 19 & 7\\\hline
16 & $C_2$ & 2 & 18   9 & 1  1 & 27 & 75\\\hline
17 & $D_4$ & 8 & 5  1  2  3  4 & 1  1  1  1  2 & 19 & 13\\\hline
18 & $C_2^2$ & 4 & 10   8   4   5 & 1  1  1  1 & 27 & 39\\\hline
19 & $C_2$ & 2 & 15  12 & 1  1 & 27 & 60 \\\hline
20 & $C_2$ & 4 & 6  4  5  4 & 1  1  1  1 & 19 & 22\\\hline
21 & $C_2$ & 2 & 18 9 & 1  1 & 27 & 75\\\hline
22 & $D_3$ & 6 & 10 1 8 & 1  1  2 & 27 & 35\\\hline
23 & $D_4$ & 8 & 5  2  3  3  7 & 1  1  1  1  2 & 27 & 16\\\hline
24 & $C_2$ & 2 & 18 9 & 1 1 & 27 & 75\\\hline
25 & 1 & -&-&-&-&-\\\hline
26 & $D_6$ & 12 & 6  4  1  4  4 & 1  1  1  2  2 & 27 & 16\\\hline
27 & $C_2$ & 2 & 18   9 & 1  1 & 27 & 75\\\hline
28 & $C_2$ & 2 & 18   9 & 1  1 & 27 & 75\\\hline
29 & 1 &-&-&-&-&-\\\hline
30 & 1 &-&-&-&-&-\\\hline
31 & 1 &-&-&-&-&-\\\hline
32 & 1 &-&-&-&-&-\\\hline
33 & $D_3$ & 6 & 10   1   8 & 1  1  2 & 27 & 35\\\hline
34 & $C_2$ & 2 & 15  12 & 1  1 & 27 & 65\\\hline
35 & $C_2$ &  2 & 15  12 & 1  1 & 27 & 65\\\hline
36 & $C_2$ & 2 & 18   9 & 1  1 & 27 & 75\\\hline
37 & $C_2$ & 2 & 12   7 & 1  1 & 19 & 55\\\hline
38 & $C_2$ & 2 & 11   8 & 1  1 & 19 & 44\\\hline
39 & $D_3$ & 6 & 10   1   8 & 1  1  2 & 2 & 35\\\hline
40 & $C_2$ & 2 & 15  12 & 1  1 & 27 & 60\\\hline
41 & $C_2$ & 2 & 18   9 & 1  1 & 27 & 75\\\hline
42 & $C_2$ & 2 & 18   9 & 1  1 & 27 & 75\\\hline
43 & 1 &-&-&-&-&-\\\hline
44 & $C_2$ & 2 & 10   9 & 1  1 & 19 & 46\\\hline
45 & $C_2$ & 2 & 12   7 & 1  1 & 19 & 55\\\hline
46 & $C_2$ & 2 & 15  12 & 1  1 & 27 & 65\\\hline
\end{tabular}
\end{center}
\caption{For each Sliwa inequality we give the group G and the order of the group. In the fourth and fifth column of the table we give the multiplicities $m_i$ and dimensions $d_i$ of the irreducible representations. $|\mathcal{Q}|$ is the cardinality of the generating sequence while and $|\widetilde{\mathcal{M}}|$ is the cardinality of the basis used after the symmetrization over the group.}
\end{figure}
\end{widetext}

\section{Conclusion}
In this work, we introduced a complete framework for the exploitation of symmetries in noncommutative polynomial optimization problems through the NPA hierarchy.
First, we gave a pedagogical introduction to the NPA hierarchy, with equal attention to the  the extraction of SOS certificates for Bell type inequalities.
In particular, we described in detail the construction of noncommutative polynomial rings, and how the semidefinite relaxations and constructed and solved.
Second, we discussed the symmetries of polynomial rings and of related optimization problems, and applied them to the semidefinite relaxations.
We showed that the corresponding semidefinite programs take a block-diagonal form, and that the number of variables/constraints reduces correspondingly.
Finally, we presented applications.
We investigated the properties of the CGLMP inequality, including its invariance properties and the irreducible representations of its symmetry group.
We proved the optimality of the conjectured optimal quantum measurements for $d=3$ and $d=4$ by constructing SOS certificates.
Finally, we explored Sliwa's inequalities for the Bell scenario containing three parties with binary inputs and outputs.
We provided SOS certificate for the Almost Quantum bound of four inequalities, and listed the symmetry groups of the 46 families present in the scenario.

We focused on computing exact SOS decomposition mostly for pedagogical reasons.
The computation of exact certificates is relevant for the certification of secure protocols, but we think that symmetry reduction will have a greater impact on medium to large scale semidefinite relaxations.
Indeed, most uses of the NPA hierarchy bump into computational time and memory limits.
Currently, the inflation technique~\cite{Wolfe2019b} is the method of choice to investigate the correlations arising from quantum causal structures~\cite{Wolfe2021}.
As the inflation technique is based on a systematic duplication of quantum resources and measurement devices, its formulation has natural symmetries, even in the absence of symmetries in the scenario/correlations under study.
While the exploitation of symmetries reduces the computational complexity, it provides an additional way of selecting elements in the generating sequence.
Indeed, several publications select a subset of the monomials of a given degree~\cite{Navascues2015a,Wolfe2021}; in a different context, the use of machine learning to do so is an active research topic~\cite{Requena2021}.
In the case of CGLMP inequalities, we saw that splitting the generating sequence under symmetry leads to a drastic reduction in the number of elements contributing to a SOS certificate.

Our work proposes a few natural questions that are still open.
First, we wonder whether our framework enables the computation of analytical SOS certificates for other Bell inequalities, and in other tasks such as steering~\cite{Uola2020} and self-testing.
Second, the NPA relaxations of the inflation technique exhibit a high degree of symmetry.
In the context of the Lasserre hierarchy, it was observed~\cite{Riener2012} that the complexity of the hierarchy, after block-diagonalization, did not depend on the number of variables, rather only on the degree of the monomials in the generating sequence.
We conjecture that the same holds for quantum inflation, which would mean that the inflation relaxation hierarchy could be indexed by a single parameter (monomial degree) instead of two (number of copies, monomial degree).
When dealing with large scale problems, another question of practical interest is the scaling of the preprocessing.
In our experience, there are two main bottlenecks in the computation of the block-diagonal form.
The first bottleneck is the computation of linearly independent elements $(\mathcal{M}_k)_k$ and the matrices $(A_k)_k$ defining the semidefinite program.
This bottleneck can be partly alleviated by computing only the relevant parts of the blocks of $\tilde{\Xi}$.
An open question is also to specialize our framework to particular scenarios such as Bell inequalities, as was done for commutative problems in~\cite{Parrilo2003a}.
The second bottleneck is the computation of the change of basis matrix $I$.
If we relax the requirement of exact results and use numerical approximations instead, one can use the approach pioneered by the RepLAB toolbox~\cite{Rosset2021}.
Finally, we mention the exploitation of continuous symmetries.
Our results still hold when the invariance of a problem is described by a compact group rather than a finite group, as the group averaging can be performed by integrating over the Haar measure.
Here again, the black-box computational approach pioneered by RepLAB can handle mixtures of continuous and discrete symmetries.

\medskip

\emph{Acknowledgements.---}We thank Jean-Daniel Bancal, S\'ebastien Designolle, Yeong-Cherng Liang, Alex Pozas-Kerstjens and Elie Wolfe for discussions. We acknowledge financial support from NCCR SwissMAP.
\clearpage

\appendix

\section{Proof of Proposition~\ref{prop:blockdiag}}
\label{app:proof:blockdiag}
Here we prove Proposition~\ref{prop:blockdiag}.

Each $\sigma_g^i$ is similar to a unitary representation $\tau_g^i = (\tau_g^i)^{-\dagger}$, according to a change of basis matrix $T_i$ such that $\sigma_g^i = T_i \tau_g T_i^{-1}$, so that:
\begin{equation}
    \sigma_g = T \tau_g T^{-1}, \; \tau_g = \bigoplus_i \mathbb{1}_{m_i} \! \! \otimes \tau_g^i, \; T = \bigoplus_i \mathbb{1}_{m_1} \!\! \otimes T_i.
\end{equation}

We have
\begin{eqnarray}
    X & = & \rho_g^\dag X \rho_g \nonumber \\
    X & = & I^{-\dag} \sigma_g^\dag I^\dag X I \sigma_g I^{-1} \nonumber \\
    \underbrace{I^\dag X I}_{:= X'} & = & \sigma_g^\dag I^\dag X I \sigma_g \nonumber \\
    X' & = & \sigma_g^\dag X' \sigma_g \nonumber \\
    X' & = & T^{-\dag} \tau_g^\dag T^\dag X' T \tau_g T^{-1} \nonumber \\
    \tau_g T^\dag X' T & = & T^\dag X' T \tau_g
\end{eqnarray}
so the matrix $T^\dagger X' T$ satisfies the assumptions of Schur's lemma, and thus
\begin{eqnarray}
    T^\dagger X' T & = & \bigoplus_i X'_i \otimes \mathbb{1}_{d_i} \nonumber \\
    X' & = & \bigoplus_i X'_i \otimes \underbrace{T_i^{-\dagger} T_i^{-1}}_{= C_i}\;.
\end{eqnarray}
As the change of base matrix $T$ is invertible, all $C_i$ are positive definite.
In particular, their diagonal elements are real and positive.
Finally, if any of the $\sigma^i$ is already unitary, the corresponding $C_i$ is a positive multiple of the identity matrix.

For $Z$ we have
\begin{eqnarray}
    Z & = & \rho_g X \rho_g^\dag \nonumber \\
    Z & = & I \sigma_g I^{-1} Z I^{-\dag} \sigma_g^\dag I^\dag \nonumber \\
    \underbrace{I^{-1} Z I^{-\dag}}_{:= Z'} & = & \sigma_g I^{-1} Z I^{-\dag} \sigma_g^\dag \nonumber \\
    Z' & = & \sigma_g Z' \sigma_g^\dag \nonumber \\
    Z' & = & T \tau_g T^{-1} Z' T^{-\dag} \tau_g^\dag T^\dag \nonumber \\
    T^{-1} Z' T^{-\dag} \tau_g & = & \tau_g T^{-1} Z' T^{-\dag}
\end{eqnarray}
then we again satisfy the assumptions of Schur's lemma and:
\begin{eqnarray}
    T^{-1} Z' T^{-\dag} & = & \bigoplus_i Z'_i \otimes \mathbb{1}_{d_i} \nonumber \\
    Z' & = & \bigoplus_i Z'_i \otimes \underbrace{T_i T_i^\dag}_{= C_i^{-1}}\;.
\end{eqnarray}

\section{CGLMP d=3}
\label{app: cglmp=3}
In this part on the Appendix, we give the relevant part of he change of basis matrix $I$ to resconstruct the SOS for $d=3$.  $\{A_k^{(8)},A_k^{(9)}\}_{k=0,...,10}$.
\begin{equation}
    I^{(8)}= \frac{1}{24}\left(
\begin{array}{cccc}
 0 & 0 & 0 & 0 \\
 4 \sqrt{3}-6 & 2 \sqrt{3} & 3-2 \sqrt{3} & -\sqrt{3} \\
 -2 \sqrt{3} & 6-4 \sqrt{3} & \sqrt{3} & 2 \sqrt{3}-3 \\
 2 \sqrt{3} & 4 \sqrt{3}-6 & -\sqrt{3} & 3-2 \sqrt{3} \\
 2 \sqrt{3}-6 & 6-2 \sqrt{3} & 3-\sqrt{3} & \sqrt{3}-3 \\
 2 \sqrt{3}-6 & 2 \sqrt{3}-6 & 3-\sqrt{3} & 3-\sqrt{3} \\
 2 \sqrt{3} & 6-4 \sqrt{3} & -\sqrt{3} & 2 \sqrt{3}-3 \\
 -2 \sqrt{3} & 4 \sqrt{3}-6 & \sqrt{3} & 3-2 \sqrt{3} \\
 4 \sqrt{3}-6 & -2 \sqrt{3} & 3-2 \sqrt{3} & \sqrt{3} \\
 2 \sqrt{3}-6 & 2 \sqrt{3}-2 & 3 & -\sqrt{3} \\
 2 \sqrt{3} & 2 & -2 \sqrt{3} & -2 \\
 -2 & 2 \sqrt{3} & 2 & -2 \sqrt{3} \\
 4 \sqrt{3}-4 & 0 & -2 \sqrt{3} & 0 \\
 0 & 0 & -\sqrt{3} & -1 \\
 0 & 4-4 \sqrt{3} & 0 & 2 \sqrt{3} \\
 0 & 0 & -2 & 0 \\
 -2 & -2 \sqrt{3} & 2 & 2 \sqrt{3} \\
 2 \sqrt{3}-2 & 2 \sqrt{3}-6 & -\sqrt{3} & 3 \\
 0 & 0 & 2 & 0 \\
 0 & 4 \sqrt{3}-4 & 0 & -2 \sqrt{3} \\
 2 \sqrt{3} & -2 & -2 \sqrt{3} & 2 \\
 -4 & 0 & 4 & 0 \\
 2 \sqrt{3}-2 & 6-2 \sqrt{3} & -\sqrt{3} & -3 \\
 0 & 0 & -\sqrt{3} & 1 \\
 2 \sqrt{3}-6 & 2-2 \sqrt{3} & 3 & \sqrt{3} \\
\end{array}
\right)
\end{equation}

\begin{equation}
    I^{-(8)}=\left(
\begin{array}{cccc}
 1 & 0 & 0 & 0 \\
 0 & 2+\sqrt{3} & 1 & 2+\sqrt{3} \\
 -1-\frac{\sqrt{3}}{2} & \frac{1}{2} & 0 & 0 \\
 \frac{1}{2}+\sqrt{3} & \frac{1}{2} \left(2+\sqrt{3}\right) & \sqrt{3} & 1 \\
 -1 & 1+\sqrt{3} & -1 & \sqrt{3} \\
 -1 & -1-\sqrt{3} & -1 & -\sqrt{3} \\
 \frac{1}{2}+\sqrt{3} & -1-\frac{\sqrt{3}}{2} & \sqrt{3} & -1 \\
 -1-\frac{\sqrt{3}}{2} & -\frac{1}{2} & 0 & 0 \\
 0 & -2-\sqrt{3} & 1 & -2-\sqrt{3} \\
 0 & -1 & 0 & -2 \\
 -\frac{\sqrt{3}}{2} & -\frac{1}{2} & -\sqrt{3} & -1 \\
 -\frac{1}{2} & -\frac{\sqrt{3}}{2} & -1 & -\sqrt{3} \\
 -1 & 0 & -2 & 0 \\
 -\frac{\sqrt{3}}{2} & -\frac{1}{2} & -\sqrt{3} & -1 \\
 -\frac{\sqrt{3}}{2} & \frac{1}{2} & -\sqrt{3} & 1 \\
 -1 & 0 & -2 & 0 \\
 -\frac{1}{2} & \frac{\sqrt{3}}{2} & -1 & \sqrt{3} \\
 \frac{1}{2} & \frac{\sqrt{3}}{2} & 1 & \sqrt{3} \\
 1 & 0 & 2 & 0 \\
 -\frac{\sqrt{3}}{2} & -\frac{1}{2} & -\sqrt{3} & -1 \\
 -\frac{\sqrt{3}}{2} & \frac{1}{2} & -\sqrt{3} & 1 \\
 1 & 0 & 2 & 0 \\
 \frac{1}{2} & -\frac{\sqrt{3}}{2} & 1 & -\sqrt{3} \\
 -\frac{\sqrt{3}}{2} & \frac{1}{2} & -\sqrt{3} & 1 \\
 0 & 1 & 0 & 2 \\
\end{array}
\right)^\top
\end{equation}

\begin{equation}
    I^{(9)}=\frac{1}{24}\left(
\begin{array}{cccc}
 0 & 0 & 0 & 0 \\
 -2 & -2 \sqrt{3} & 3 & 3 \sqrt{3} \\
 -2 & 2 \sqrt{3} & 3 & -3 \sqrt{3} \\
 -2 & 2 \sqrt{3} & 3 & -3 \sqrt{3} \\
 4 & 0 & -6 & 0 \\
 4 & 0 & -6 & 0 \\
 -2 & -2 \sqrt{3} & 3 & 3 \sqrt{3} \\
 -2 & -2 \sqrt{3} & 3 & 3 \sqrt{3} \\
 -2 & 2 \sqrt{3} & 3 & -3 \sqrt{3} \\
 0 & 0 & -1 & \sqrt{3} \\
 -2 & -2 \sqrt{3} & 2 & 2 \sqrt{3} \\
 -2 & -2 \sqrt{3} & 2 & 2 \sqrt{3} \\
 0 & 0 & 2 & 0 \\
 0 & 0 & -1 & -\sqrt{3} \\
 0 & 0 & 2 & 0 \\
 0 & 0 & 2 & 0 \\
 -2 & 2 \sqrt{3} & 2 & -2 \sqrt{3} \\
 0 & 0 & -1 & -\sqrt{3} \\
 0 & 0 & 2 & 0 \\
 0 & 0 & 2 & 0 \\
 -2 & 2 \sqrt{3} & 2 & -2 \sqrt{3} \\
 4 & 0 & -4 & 0 \\
 0 & 0 & -1 & \sqrt{3} \\
 0 & 0 & -1 & \sqrt{3} \\
 0 & 0 & -1 & -\sqrt{3} \\
\end{array}
\right)     
\end{equation}

\begin{equation}
    I^{-(9)}=  \left(
\begin{array}{cccc}
 1 & 0 & 0 & 0 \\
 0 & \sqrt{3} & 1 & \sqrt{3} \\
 -\frac{3}{2} & \frac{\sqrt{3}}{2} & 0 & 0 \\
 -\frac{3}{2} & -\frac{\sqrt{3}}{2} & -1 & -\sqrt{3} \\
 0 & -\sqrt{3} & -1 & -\sqrt{3} \\
 0 & \sqrt{3} & -1 & \sqrt{3} \\
 -\frac{3}{2} & \frac{\sqrt{3}}{2} & -1 & \sqrt{3} \\
 -\frac{3}{2} & -\frac{\sqrt{3}}{2} & 0 & 0 \\
 0 & -\sqrt{3} & 1 & -\sqrt{3} \\
 -3 & 0 & -2 & 0 \\
 -\frac{3}{2} & -\frac{3 \sqrt{3}}{2} & -1 & -\sqrt{3} \\
 \frac{3}{2} & -\frac{3 \sqrt{3}}{2} & 1 & -\sqrt{3} \\
 3 & 0 & 2 & 0 \\
 -\frac{3}{2} & -\frac{3 \sqrt{3}}{2} & -1 & -\sqrt{3} \\
 \frac{3}{2} & -\frac{3 \sqrt{3}}{2} & 1 & -\sqrt{3} \\
 3 & 0 & 2 & 0 \\
 \frac{3}{2} & \frac{3 \sqrt{3}}{2} & 1 & \sqrt{3} \\
 \frac{3}{2} & -\frac{3 \sqrt{3}}{2} & 1 & -\sqrt{3} \\
 3 & 0 & 2 & 0 \\
 \frac{3}{2} & \frac{3 \sqrt{3}}{2} & 1 & \sqrt{3} \\
 -\frac{3}{2} & \frac{3 \sqrt{3}}{2} & -1 & \sqrt{3} \\
 3 & 0 & 2 & 0 \\
 \frac{3}{2} & \frac{3 \sqrt{3}}{2} & 1 & \sqrt{3} \\
 -\frac{3}{2} & \frac{3 \sqrt{3}}{2} & -1 & \sqrt{3} \\
 -3 & 0 & -2 & 0 \\
\end{array}
\right)^\top 
\end{equation}

\bibliography{sos_references} 

\end{document}